\theoremstyle{plain}
\theoremstyle{plain}
\newtheorem{corollary}{Corollary}
\theoremstyle{plain}
\newtheorem{lemmacounter}{Theorem}
\newtheorem{lemma}[lemmacounter]{Lemma}
\theoremstyle{plain}
\newtheorem{defcounter}{Theorem}
\newtheorem{definition}[defcounter]{Definition}
\theoremstyle{plain}
\newtheorem{special case}[defcounter]{Special Case}
\newcommand*{\Scale}[2][4]{\scalebox{#1}{$#2$}}%
\newcommand{\figref}[1]{Fig.~\ref{#1}}
\begin{document}

\title{Modeling and Analysis of Uplink Non-Orthogonal Multiple Access (NOMA) in Large-Scale Cellular Networks Using  Poisson Cluster Processes}
\author{Hina Tabassum, Ekram Hossain, and Md. Jahangir Hossain\thanks{H. Tabassum and E. Hossain are with the Department of Electrical and Computer Engineering at the University of Manitoba, Canada.
 Md. J. Hossain is with the University of British Columbia (Okanagan Campus), Canada.
 }}
\maketitle

\begin{abstract}
Non-orthogonal multiple access (NOMA) serves multiple users by superposing their  distinct message signals. The desired message signal is decoded at the receiver by applying  successive interference cancellation (SIC). 
Using the theory of Poisson cluster process (PCP), this paper provides a framework to analyze multi-cell uplink NOMA systems. Specifically,
we characterize the rate coverage probability of a NOMA user who is at rank $m$ (in terms of the distance from its serving  BS) among all users in a cell and the mean rate coverage probability of all users in a cell. Since the signal-to-interference-plus-noise ratio  (SINR) of $m$-th user relies on efficient SIC, we consider three scenarios, i.e., perfect SIC (in which the signals of $m-1$ interferers who are stronger than $m$-th user are decoded successfully), imperfect SIC (in which the signals of of $m-1$ interferers who are stronger than $m$-th user may or may not be decoded successfully), and imperfect worst case SIC (in which the decoding of the signal of $m$-th user  is always unsuccessful whenever the decoding of its relative $m-1$ stronger users is unsuccessful).  The worst case SIC assumption provides remarkable simplifications in the mathematical analysis and is found to be highly accurate for scenarios of practical interest.
To analyze the rate coverage expressions, we first characterize the Laplace transforms of the intra-cluster interferences in closed-form considering both perfect and imperfect SIC scenarios. In the sequel, we characterize the  distribution of the distance of a user at rank $m$ which is shown to be the generalized Beta distribution of first kind and the conditional distribution of the distance of the intra-cluster interferers which is different for both perfect and imperfect SIC scenarios.
The Laplace transform of the inter-cluster interference is then characterized by exploiting distance distributions from geometric probability. The derived expressions are customized to capture the performance of a user at rank $m$ in an equivalent orthogonal multiple access (OMA) system. Finally, numerical results are presented to validate the derived expressions. It is shown that the average rate coverage of a NOMA cluster outperforms its counterpart OMA cluster  with higher number of users per cell and higher target rate requirements. A comparison of Poisson Point Process (PPP)-based and PCP-based modeling is conducted which shows that the PPP-based modeling provides optimistic results for the NOMA systems.
\end{abstract}

\begin{IEEEkeywords}
Matern cluster process,  multi-cell uplink NOMA,  rate coverage probability, order statistics.
\end{IEEEkeywords}

\section{Introduction}

Until very recently, the state-of-the-art wireless communications systems have been utilizing a variety of orthogonal multiple access (OMA) technologies, in which the resources are allocated orthogonally to multiple users.  These techniques include frequency-division
multiple access (FDMA), time-division multiple access
(TDMA), code-division multiple access (CDMA), and
orthogonal frequency-division multiple access (OFDMA).  Since OMA maintains the orthogonality among users in a cell, the intra-cell interference (i.e., inter-user interference within a cell) does not exist. As a result, the  information signals of users can be retrieved at a low complexity. Nonetheless, the number of served users is limited by the number of orthogonal resources. 

Conversely, NOMA serves  multiple users  simultaneously  using the same spectrum resources (i.e., radio channels), however, at the cost of increased intra-cell interferences.
To mitigate the intra-cell interferences, NOMA exploits Successive Interference Cancellation (SIC) at the receivers~\cite{saito2013}.  NOMA supports low transmission latency and signaling cost compared to  conventional OMA  where each user is obliged to send  a channel scheduling request to its serving base station (BS).   With these attractive features, NOMA can be a potential access technology for 5G networks.  Nevertheless, the conclusions about the performance of NOMA are largely unknown in multi-cell network scenarios. For instance, in uplink NOMA, a large number of transmitting users
in the neighboring co-channel BSs can result in  high interference at the BS of interest. Consequently, the uplink multi-cell interference in NOMA is directly proportional to the number  of  transmitting users per neighboring co-channel BS and is more severe compared to  the inter-cell interference in OMA.

\subsection{Background Work}
The concept of NOMA  was initially proposed in \cite{saito2013} for downlink transmissions. Various practical aspects, such as multi-user scheduling, impact of error propagation in SIC, overall system overhead, and user mobility were discussed. System level simulations were conducted in \cite{ben2015} to highlight the benefits of two-user NOMA over  OMA, in terms of overall system throughput as well as individual user's throughput.

The approximate expressions for the ergodic sum-rate and outage probability of a user in a given downlink NOMA cluster were derived in \cite{ding2014}. 
Later, in \cite{ding2015}, the throughput gains of the two-user cooperative NOMA (in which the strong channel user relays the information of weak channel user) were investigated. 
The idea of   cooperative  NOMA was then applied to wireless-powered systems  in \cite{liu}. Based on users' distances, grouping of users was  performed first. Then, three  user  selection  schemes  were investigated, i.e., (i)~pairing of the nearest users from each group, (ii) pairing of the nearest user from one group and the farthest  user from another group, and (iii) arbitrary user pairing. The direct link was used to transfer energy from the BS. A cooperative data link was established for the lower channel gain user via the higher channel gain users. Closed-form approximate expressions for  the  outage  probability  and   throughput of a two-user NOMA cluster were derived. 
In \cite{dingearly}, the user pairing was investigated  considering  fixed NOMA (F-NOMA) and cognitive radio inspired (CR-NOMA). In F-NOMA, any two users could make a NOMA pair  based on their channel gains.  While in CR-NOMA, a weak channel user opportunistically gets paired with the strong channel user provided  that the interference caused by the strong user will not harm the rate requirement of the weak channel user. It was observed that  CR-NOMA  pairs the  strongest user with the second strongest user, whereas F-NOMA pairs a strongest user with the weakest user in the system.

A general concept of uplink NOMA was discussed in \cite{zhang2016}. An uplink power back-off policy was proposed to distinguish users in a NOMA cluster with nearly similar signal strengths (given that traditional uplink power control is applied). Closed-form analysis was performed for ergodic sum-rate and outage probability of  a two-user NOMA cluster.  Further, the problem of user scheduling, subcarrier allocation, and power control in uplink NOMA has been investigated by various researchers in~\cite{uplink1,uplink2} with perfect SIC at the BS. A game theoretic algorithm for uplink power control has been designed in \cite{uplink3} considering a two-cell NOMA system where inter-cell interference is assumed to be Gaussian distributed.

\subsection{Motivations and Contributions}
To date, most of the research investigations consider the throughput analysis of NOMA  for single-cell downlink systems with perfect SIC   at the receivers. The derived expressions generally leverage on high signal-to-noise ratio (SNR) and asymptotic assumptions as well as the  application of Gaussian-Chebyshev quadrature (GCQ) technique which approximates all integrals into finite sums. 
Unfortunately, the conclusions about the performance gains of NOMA (compared to OMA) in single-cell/single-cluster scenarios cannot be applied directly to  multi-cell/multi-cluster scenarios. The reason is the inter-cell\footnote{The term inter/intra-cell and inter/intra-cluster interference will be used interchangeably throughout the paper.} interferences in NOMA can be quite severe as well as distinct from OMA, especially  in the uplink scenarios~\cite{nomamag}. Note that, in uplink NOMA, the inter-cell interference incurred at the BS of interest is directly proportional to the number of transmitting users in  neighboring co-channel BSs. This is different from an equivalent uplink OMA system where only one user transmits at a time per neighboring co-channel BS. 

Further, compared to downlink NOMA, we note that the performance analysis of uplink NOMA  is particularly more challenging due to the mathematical structure of the intra-cell interferences. 
In the uplink NOMA, the BS receives transmissions from all users
simultaneously. As such, the intra-cell interference to a user
is a function of the channel statistics of other users within
the cell. On the other hand, in downlink NOMA, the intra-cell interference to a user is a function of its own channel statistics~\cite{ding2014,ding2015,dingearly}. To this end, the contributions of this paper are outlined as follows:
\begin{itemize}
\item Using the theory of order statistics and Poisson Cluster Process\footnote{Neyman-Scott PCP, such as Modified Thomas Cluster process and Matern Cluster Process, are recently exploited in a set of research studies for  performance evaluation of ad-hoc clustered networks~\cite{martin}, D2D systems~\cite{afshang1,afshang2}, and downlink multi-tier cellular networks~\cite{hasna,afshang0}.} (PCP), we develop a framework to analyze 
 the rate coverage probability of a user who is at rank $m$ (in terms of the distance from its serving BS) among all users in a cell and the mean rate coverage probability of all users in a cell, considering a  multi-cell uplink NOMA system.  Unlike typical stochastic geometry frameworks where the user locations are uniform over the 2-D plane and are independent of the BS locations~\cite{hesham1,harpreet}, we exploit Matern Cluster Process (MCP) to accurately model the  proximity of multiple users around a BS\footnote{{The Poisson Point Process (PPP)-based modeling of BSs and user devices focuses on the link between the serving BS and a typical user. Since the typical user can be located anywhere in the cell, results are averaged over all spatial positions inside the cell. 
Such an approach provides higher analytical flexibility. 
However, in practice, users are more likely clustered around a BS and are distinct due to their channel conditions. In this regard, PCP  have been shown empirically to be a more accurate cellular network modeling technique~\cite{iccc} that allows location-specific performance modeling of users.}}.

\item  NOMA systems rely on efficient SIC and the interference  of a user at rank $m$ needs to be adapted according to the level of  SIC. We consider three SIC scenarios that include perfect SIC (in which the signals of $m-1$ interferers who are stronger than $m$-th user are decoded successfully), imperfect SIC (in which the signals of $m-1$ interferers who are stronger than $m$-th user may or may not be decoded successfully), and imperfect worst case SIC (in which the decoding of the signal of $m$-th user  is always unsuccessful whenever the decoding of its relative $m-1$ stronger users is unsuccessful).  The worst case SIC assumption provides remarkable simplifications in the mathematical analysis and is found to be highly accurate for scenarios of practical interest.

\item  The interference power at the BS of a given cell/cluster is composed of intra-cluster and inter-cluster interferences. We derive the Laplace transform of the intra-cluster interference in closed-form considering various SIC scenarios. In the sequel, we also derive the distribution for the distance of a user at rank $m$ which is shown to be the generalized Beta distribution of first kind and the conditional distribution of the distance of the intra-cluster interferers which is different for both perfect and imperfect SIC scenarios. The Laplace transform of the inter-cluster interference is then characterized using distance distributions from geometric probability. A less-complex bound is then exploited to model the Laplace transform of the inter-cluster interference.

\item The derived rate coverage expressions are  customized to evaluate the performance of a user at rank $m$ in an equivalent OMA system in closed-form. Numerical results are presented to validate the derived expressions. 
Our results indicate that the performance benefit of OMA diminishes quickly with  the increase in number of users per cluster and higher user rate requirements. A comparative performance analysis of PPP-based and PCP-based modeling is conducted using simulations. 
It is shown that  PPP-based modeling generally provides optimistic results for the NOMA systems due to the homogeneous distribution of users regardless of the BS locations which reduces the impact of intra-cluster interference.
\end{itemize}
  
The rest of the paper is structured as follows. Section~II discusses the working principle of uplink NOMA along with cellular network model, channel model, and interference model. In Section~III, we describe the fundamental differences between conventional SIC and SIC for NOMA. Considering  perfect SIC, imperfect SIC, and imperfect worst case SIC, we model the interferences and define the desired performance metrics. In Section~IV, we derive relevant distance distributions required for the characterization of the Laplace transforms of the interferences. In Section~V, we derive the rate coverage  expressions for both NOMA and OMA systems. Finally, Section~VI discusses numerical and simulation results followed by the concluding remarks in Section~VII.

\section{System Model and Assumptions}

\begin{figure}[t]
\begin{center}
\includegraphics[width = 3.5in]{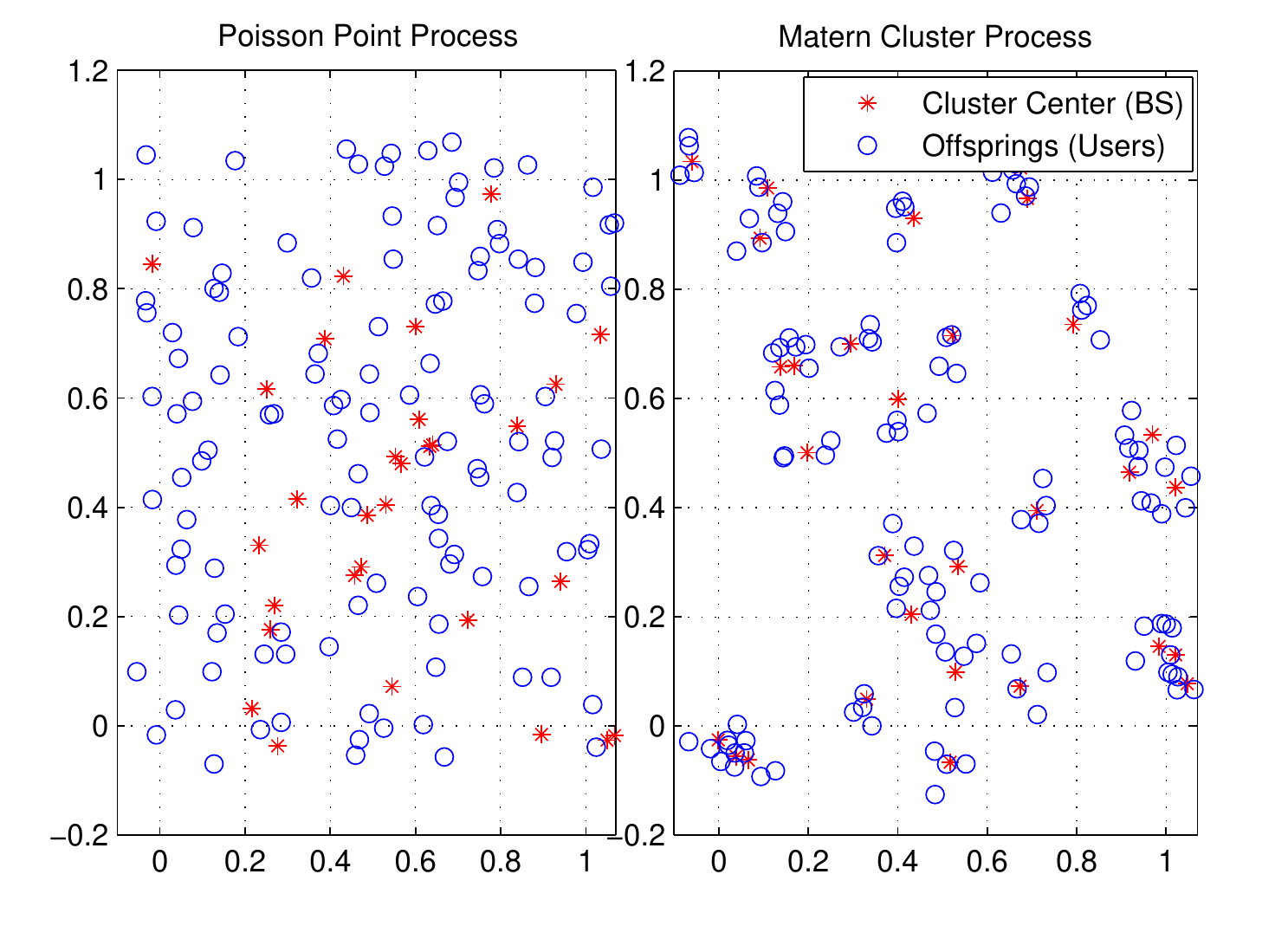}
\caption[c] { Graphical illustration of the single-tier cellular network with (a)~conventional PPP model and (b)~Matern Cluster Process (MCP) model in which the number of users are Poisson distributed and their locations are uniformly distributed within a disc of radius $R$ per cluster center.}
\label{faircomp2}
\end{center}
\end{figure}

\subsection{Spatial Cellular Network Model}

We consider  a single-tier cellular network composed of macrocell base stations (BSs) surrounded by user devices. The locations of transmitting user devices  are modeled as a stationary  and isotropic  PCP.
\begin{definition}[Poisson Cluster Process (PCP)~\cite{martin}]
A PCP  results from applying homogeneous independent
clustering to a stationary Poisson Point Process (PPP). In particular, the parent points form a  stationary PPP $\phi=\{x_1,x_2,\cdots\}$  of density $\lambda$. The  off-springs $N^{x_i}$ generated for a given parent $x_i$ are a family of independent and identically distributed (i.i.d.) finite point sets with distribution independent of the parent process. The complete PCP can thus be represented as
$
\Xi=\underset{x \in \Phi}{\cup} N^{x}.
$
Note that the parent points themselves will not be included in the PCP. The parents and off-springs are referred to as the cluster centers  and the cluster members, respectively. 
\end{definition}
\begin{definition}[Neyman-Scott PCP]
If the number of points per cluster follow a Poisson distribution with mean intensity $\bar{c}$, such a PCP is referred to as a Neyman-Scott Process. 
\end{definition}
MCP is a special case of the Neyman-Scott PCP  where the
cluster centers (BSs) are modeled by a parent homogeneous PPP $\Phi_m = \{x_0, x_1,x_2,\cdots\}$  in the Euclidean plane with density $\lambda_m$. 
Each parent point $x_i \in \Phi_m$ forms the center of a cluster around which  daughter points (user devices) are uniformly distributed in a circle of radius $R$ as shown in Fig.~1. For ease of exposition, we consider $\bar{c}$ number of users per cluster.
Each daughter point located at $y$ with respect to  its cluster center $x_i$ has a density function as:
\begin{equation}\label{fy}
f(y)=\frac{1}{\pi R^2}, \:\:\:||y|| \leq R,
\end{equation}
where $||y||=r$ is the  distance of any arbitrary daughter point (user device) relative to its cluster center (serving BS) and its density function can be given as follows:
\begin{equation}\label{fr}
f(r)=\frac{2 r}{ R^2}, \:\:\: r\leq R.
\end{equation}
The resulting MCP is  a stationary and isotropic point process of density $\bar{c} \lambda_m$ and can be defined as $
\Xi=\underset{x_i \in \Phi_m}{\cup} \mathcal{N}^{x_i}.
$
Without loss of generality, we perform analysis for a user at rank $m$ located in a randomly chosen cluster which is referred as representative cluster located at $x_0$ throughout the paper.

\subsection{Working Principle of Uplink NOMA}
In an uplink NOMA cluster,  each user transmits its individual signal $\hat x_i$ with a transmit power $p_i$ such that the received signal at the BS can be defined as $\hat y=\sum_{i=1}^{\bar{c}} \sqrt{p_i} h_i \hat x_i$. 
Note that, to apply SIC  and decode signals at the BS, it is crucial to maintain the distinctness of various signals superposed within $\hat y$. 
Since the channels of different users are different in the uplink, each message signal experiences distinct channel gain\footnote{The conventional uplink transmit power control (typically intended to equalize the received signal powers of users) may remove the channel distinctness and thus may not be feasible for uplink NOMA transmissions.}. 
As a result, the received signal power corresponding to the strongest channel user is likely the strongest at the BS. Therefore, this signal is decoded first at the BS  and experiences interference from all  users in the cluster with relatively weaker channels. That is,  the transmission of the highest channel gain user experiences  interference from all users within its cluster, whereas the transmission of  the lowest channel gain user receives zero interference from the users in its cluster.

\subsection{Channel and Interference Model}
NOMA allows multiple users in a cluster to share the same 
resources by superposing their distinct message signals. 
All users are served on the same channel and time slot. All users/BSs are equipped with a single antenna.
Within the representative cluster, any arbitrary user device located at $y \in \mathcal{N}^{x_0}$ with respect to its serving BS located at $x_0 \in \Phi_m$ transmits its individual signal with power $P_u$ such that the superposed  NOMA signal at the representative BS can be defined as follows:
\begin{equation}
S_0=\sum_{y \in \mathcal{N}^{x_0}} S_y=\sum_{y \in \mathcal{N}^{x_0}} {P_u} h_{y_{x_0}} ||y||^{-\alpha},
\end{equation} 
where 
$\alpha>2$ is the path-loss exponent and $h_{y_{x_0}}$ is the exponential random variable which models Rayleigh fading associated
with the channel between the node located at $y$ and the BS located at $x_0$. All fading coefficients are i.i.d. and the additive noise is complex Gaussian distributed with mean zero and variance $N_0/2$ per dimension.

Consequently, the uplink NOMA transmission of a user located at $y_0$ within the representative cluster is vulnerable to two  kind of interferences, i.e., 
\begin{itemize}
\item {\it Intra-cluster interference:}  is the interference received at the  representative BS from all user devices located within the representative cluster (except the user located at $y_0$). However, after performing SIC, some of these interferences can be removed
(details of SIC will follow in the next section). 
\item {\it Inter-cluster interference:} is the interference received at the representative BS from all user devices located outside the representative cluster. 
\end{itemize}
Provided that the center of the representative cluster is located at $x_0 \in \Phi_m$, the intra-cluster interference experienced by the transmission of user  located at $y_0$ with respect to its cluster center $x_0$ can   be modeled as follows:
\begin{equation}
I_{\mathrm{intra}}=\sum_{y \in \mathcal{N}^{x_0} \backslash y_0}  P_u h_{y_{x_0}} ||y||^{-\alpha}.
\end{equation}
Similarly, the inter-cluster interference at the representative BS from the user devices located outside the representative cluster center can be modeled as follows:
\begin{equation}\label{inter}
I_{\mathrm{inter}}=\sum_{x \in \Phi_m \backslash x_0} \sum_{y \in \mathcal{N}^{x}}  P_u h_{y_{x}} ||x+ y||^{-\alpha},
\end{equation}
and $I_{\mathrm{agg}}=I_{\mathrm{intra}}+I_{\mathrm{inter}}$.
However, since NOMA systems rely on efficient SIC, the intra-cluster interference model needs to be adapted according to the level of SIC cancellations. This is elaborated in detail in the next section.

\section{Successive Interference Cancellation (SIC) and Performance Metrics}
In this section, we will discuss the fundamental differences between conventional SIC and SIC for NOMA. We then describe the signal-to-interference-plus-noise ratio (SINR) modeling with perfect and imperfect SIC in uplink NOMA. Considering three possible SIC scenarios, i.e., perfect SIC, imperfect SIC, and imperfect worst case SIC, we finally define the performance metrics.

SIC is among one of the  best known interference cancellation methods as (i)~the SIC receiver is architecturally similar to traditional non-SIC receivers in terms of hardware complexity and cost, (ii)~it uses the traditional decoder to decode the composite signal at different stages and neither complicated decoders nor
multiple antennas are required, and (iii)~it can achieve the Shannon capacity for both the broadcast and multiple access networks~\cite{SIC1,SIC2}.
Typically, SIC is used to regenerate the interfering signals and
subsequently cancel them from the received composite signal to improve the SINR of the desired signal. That is, the SIC receiver first decodes the strongest signal by treating other signals as
noise. Then it regenerates the analog signal from the decoded
signal and cancels it from the received composite signal. The remaining signal is thus free from the  strongest interfering signal. Then, the SIC receiver
proceeds to decode, regenerate, and cancel the second strongest
interfering signal from the remaining signal and so on,
until the desired signal can be decoded.  

\subsection{SIC and SIC Error Propagation in Uplink NOMA}
In uplink NOMA, we apply the same SIC principle at the BS, i.e., the SIC receiver first decodes the strongest signal by treating other signals as noise and so on.  However, the difference is that the intra-cluster interfering signals are also the desired signals; therefore, it is not possible to provide the benefits of SIC (enhance the SINR) unequivocally for all users. 
That is, within a cluster,  the user with strongest signal experiences  interference from all users and  the user with the weakest signal enjoys zero intra-cluster interference. {\em Evidently, the intra-cluster interference statistics will vary for all user devices within the representative NOMA cluster.}

Since the decoding of the strongest signal is performed first at the BS, its success/failure has a significant impact on the decoding of other users' signals.  Specifically, depending on the decoding result of the strongest signal, the interference used for the decoding of the second strongest signal  differs, which makes the link-to-system mapping difficult. If the strongest signal is decoded correctly, its replica signal can be  subtracted successfully from the superposed signal at the BS.  Otherwise, the second strongest user will experience the interference from the  strongest user as well as other users in the cluster. This phenomenon is referred to as {SIC error propagation}.

\subsection{Modeling of Intra-cluster Interference with SIC}
To model the intra-cluster interference with SIC, first the  BS needs to  rank the received powers of various users  as
$\{S_{(1)},S_{(2)}, \cdots, S_{(m)}, \cdots, S_{(\bar{c})}\}$ such that $S_{(m)} \geq  S_{(\bar c)}$ with $m < \bar c$. 
However, note that the impact of path-loss factor is more stable and  dominant compared to the instantaneous multi-path channel fading effects.  Therefore, the order statistics of the distance outweigh the fading effects, which vary on a much shorter time scale.
As such, the ranking of users in  terms of their distances from the serving BS is generally considered as a reasonable approximate of their respective ranked received signal powers~\cite{approx1}. This approximation provides great flexibility  for the analytical purposes. Also, the SIC based on long-term channel states is more practically feasible since it requires less overheads for channel estimation. Note that the exact performance analysis of the user with $m^{\mathrm{th}}$ strongest signal is unwieldy to solve since the distribution of $S_{(j)}$ is the ranked distribution of a composite uniform and exponential random variable and  the joint distribution of several composite ordered random variables is required.

{\bf Approximation:} As mentioned above, the impact of path-loss factor is more dominant compared to the channel fading effects. Hence, for tractability reasons, we assume that ordering of the received signal powers   can be approximately achieved by ordering the distances of the users as $r_{(1)} \leq r_{(2)}, \cdots,\leq r_{(m)}, \cdots,\leq r_{(\bar c)}$ such that $r_{(m)} \leq  r_{(\bar c)}$ with $m < \bar c$ . That is, when the  $j$-th strongest signal  is decoded and subtracted from the composite signal, this means that the remaining interferers are located farther than the $j$-th rank user whose distance is $r_{(j)}$ from the representative BS.

The intra-cluster interference and in turn the SINR of $m$-th rank user can thus be modeled for perfect and imperfect SIC scenarios, respectively, as shown below.
\subsubsection{Perfect SIC} 
In this case, a given user at rank $m$  receives interferences from all users with relatively weaker channel gains (or users with farther distances as per the approximation) and the BS perfectly decodes/cancels the $m-1$ strong interferences. The intra-cluster interference experienced by any user  at $m$-{th} rank in a cluster can thus be modeled, after {\em perfectly} canceling $m-1$ strong interferences, as follows:
\begin{equation}\label{perfectI}
I^{m}_{\mathrm{intra}}= 
\sum_{j=m+1}^{\bar{c}}  S_{(j)}
\approx
\sum_{\substack{j=m+1\\y_j\in \mathcal{N}^{x_0}}}^{\bar{c}}  P_u h_{{y_j}_{x_0}} ||y_{(j)}||^{-\alpha},
\end{equation}
where $||y_{(j)}||=r_{(j)}$. Note that the ranking is applied only at the distances $||y_{(j)}||^{-\alpha}$  as per the approximation.
The SINR experienced by any user  at $m^{\mathrm{th}}$ rank in the representative cluster can therefore be defined as follows:
\begin{equation}\label{perfectsic}
\mathrm{SINR}_{m}=\frac{S_{(m)}}
{I^{m}_{\mathrm{agg}}}
=\frac{S_{(m)}}
{I^{m}_{\mathrm{intra}}+I_{\mathrm{inter}}+N_0}.
\end{equation}

\subsubsection{Imperfect SIC and Detection Probability}
The signals from $m-1$  interferers (who are located closer to the BS than the rank $m$ user) may or may not be decoded perfectly; therefore, SIC may or may not be performed in a perfect fashion. 
In such a case, we first define the probabilities for the successful detection of the signals of  the ranked users (ranked in terms of their distances), respectively, as follows:
\begin{align*}
&p_{(1)}=\mathbb{P}\left(\frac{S_{(1)}}{I^{1}_{\mathrm{agg}}} \geq \theta\right),
\nonumber\\
&
p_{(2)}=
p_{(1)}
\mathbb{P}\left(\frac{S_{(2)}}{I^{2}_{\mathrm{agg}}} \geq \theta\right)
+
\bar{p}_{(1)}
\mathbb{P}\left(\frac{S_{(2)}}{I^2_{\mathrm{agg}}+S_{(1)}} \geq \theta\right),
\nonumber\\
&
p_{(3)}=
p_{(1)} p_{(2)}
\mathbb{P}\left(\frac{S_{(3)}}{I^{3}_{\mathrm{agg}}} \geq \theta \right)
+
\bar{p}_{(1)}p_{(2)}
\mathbb{P}\left(\frac{S_{(3)}}{I^{3}_{\mathrm{agg}}+S_{(1)}} \geq \theta \right)
\nonumber\\&+
\Scale[1]{\bar{p}_{(2)} p_{(1)}
\mathbb{P}\left(\frac{S_{(3)}}{I^{3}_{\mathrm{agg}}+S_{(2)}} \geq \theta \right)
+
\bar{p}_{(1)}\bar{p}_{(2)}
\mathbb{P}\left(\frac{S_{(2)}}{I^{3}_{\mathrm{agg}}+\sum_{j=1}^ 2 S_{(j)}} \geq \theta \right)},
\end{align*}
where $p_{(j)}$ and $\bar{p}_{(j)}, \forall j=1,2,3,\cdots$,  represent the probability of successful and unsuccessful detection of $j$-th ranked user's signal, respectively. It can be seen that the BS attempts to decode the closest user without
any interference cancellation. If the decoding is unsuccessful, the interference from this user remains intact. Subsequently, we can generalize the detection probability of a user at $m$-{{th}} rank as follows:
\begin{align}\label{pm}
p_{(m)}=&\sum_{b\in \mathcal{B}} 
A
\mathbb{P}\left(\frac{S_{(m)}}{I^{m}_{\mathrm{agg}}+\sum^{m-1}_{j=1}{(1-b(j))S_{(j)}}} \geq \theta\right),
\end{align}
where $A=\left(\prod^{m-1}_{j=1} (p_{(j)})^{b(j)} (\bar p_{(j)})^{1-b(j)}\right)$, $\theta$ is the signal detection threshold, and $\mathcal{B}$
denotes the set of $2^{m-1}$ combinations in which each combination $\mathbf{b}$ has $m-1$ bits. The successful detection is represented by a binary digit $b(j)=1$ whereas the detection failure is given by $b(j)=0$.
The intra-cluster interference experienced by a user at rank $m$ thus depends on whether the detections  for $m-1$ closer users were successful or not. As such,  conditioned on a given combination $\mathbf{b}$, the SINR experienced by a user at rank $m$ can  be modeled as:
\begin{align}\label{imperfectI}
\mathrm{SINR}_{m,\mathbf{b}}=\frac{S_{(m)}}{I^{m,\mathbf{b}}_{\mathrm{intra}}+I_{\mathrm{inter}}+N_0},
\end{align}
where
$$I^{m,\mathbf{b}}_{\mathrm{intra}}=I^{m}_{\mathrm{intra}}+I^{m}_{\mathrm{add}}=I^{m}_{\mathrm{intra}}+\sum^{m-1}_{j=1}{(1-b(j))S_{(j)}}.$$
Note that, even after successful detection, a given user can still experience  rate outage (i.e., the achievable rate may remain below the target rate requirement).

\subsection{Performance Metrics}
We analyze the performance gains of uplink NOMA  considering a system where each user has a target data rate requirement. For this case, the rate coverage probability of a user at rank $m$ and the mean rate coverage probability of a cluster are  relevant performance metrics. These metrics are defined for different SIC scenarios in the following.
\subsubsection{Rate Coverage}
Rate coverage probability  is the probability that a given user's achievable rate remains above the target data rate. Mathematically, the rate coverage probability of a user at $m$-th rank can be defined as  $\mathbb{P}(\mathrm{log}_2(1+\mathrm{SINR}_m) \geq R_m)$, where $R_m$ is the target data rate requirement of the $m$-th ranked user.
Now, we define the rate coverage probability of $m$-th user in the following specific cases:
\begin{itemize}
\item {\em Rate Coverage with Perfect SIC:} Using the definition of $\mathrm{SINR}_{m}$ from \eqref{perfectsic}, the  rate coverage probability of a user at rank $m$ in the representative cluster can be defined as: 
\begin{align}\label{best}
\mathcal{C}^{(\mathrm{P})}_m=\mathbb{P}(\mathrm{SINR}_{m} \geq \gamma_m)
=\mathbb{P}\left(\frac{S_{(m)}}{I^{m}_{\mathrm{agg}}} \geq \gamma_m\right),
\end{align}
where $\gamma_m=2^{R_m}-1$ is the desired SINR corresponding to the rate requirement $R_m$ of the user at rank $m$. 
\item {\em Rate Coverage with Imperfect SIC:} In this case, we consider the probability of decoding/canceling interferences from $m-1$ closer interferers as less than one.  As such, the rate coverage probability of a user at rank $m$ needs to consider all possible combinations $\mathbf{b} \in \mathcal{B}$ and thus can be defined using \eqref{imperfectI} as follows:
\begin{align}
\mathcal{C}^{(\mathrm{I})}_m
&=
\sum_{b\in \mathcal{B}} A
\mathbb{P}\left({\mathrm{SINR}_{m,\mathbf{b}}} \geq  \gamma_m\right).
\end{align}

\item {\em Rate Coverage with Imperfect SIC - Worst Case:}
The worst-case model assumes that the decoding of any user at rank $m$ is always unsuccessful whenever the decoding of his relative $m-1$ closer users is unsuccessful. Such a worst-case
model is simple and allows evaluating the impact of SIC error propagation on the NOMA performance without invoking complicated NOMA specific link-to-system mapping. The worst-case detection probability of a user at rank $m$ can therefore be given as:
\begin{equation}\label{pworst}
{p}^{\mathrm{worst}}_{(m)}= \prod_{i=1}^{m-1} \mathbb{P}(\mathrm{SINR}_{i} \geq \theta),
\end{equation}
where $\mathrm{SINR}_{i}$ can be given using \eqref{perfectsic}.
The rate coverage probability of a user at rank $m$ can thus be given as:
\begin{equation}
\mathcal{C}^{\mathrm{worst}}_{m}={p}^{\mathrm{worst}}_{(m)}\mathcal{C}^{(\mathrm{P})}_m.
\end{equation}
\end{itemize}
As a by-product, we can evaluate the rate coverage of a user at rank $m$ in an equivalent TDMA-based OMA system where $\bar c$ users in a cluster are served in orthogonal time slots. The rate of a user at rank $m$ in OMA system can  be defined as $R^{(\mathrm{oma})}_m= \frac{1}{\bar{c}} \mathrm{log}_2 (1+\mathrm{SINR}^{\mathrm{oma}}_m)$. For fair comparison with NOMA, we need to include the scaling factor of $\frac{1}{\bar{c}} $ which shows the portion of available resources to a given user in OMA system. As such, if the target rate requirement of the user is $R_{\mathrm{th}}$, the desired SINR threshold  $\gamma^{(\mathrm{oma})}_m$ can be defined as $2^{R_{\mathrm{th}} \bar{c}}-1$. 
Subsequently, substituting zero intra-cluster interference, $\bar{c}=1$ for inter-cell interfering clusters, and replacing $\gamma_m=\gamma^{(\mathrm{oma})}_m$ in the rate coverage expressions of NOMA, we can determine the rate coverage of a user at rank $m$ in an OMA system.

Further, we can also calculate the average rate of a user at rank $m$. For  perfect SIC, imperfect SIC, and worst case SIC scenarios,  average rates for a user at rank $m$ can be computed, respectively, as follows: 
\begin{align*}
&
\mathcal{R}^{(\mathrm{P})}_m
=\mathbb{E}[\mathrm{ln}\left(1+\mathrm{SINR}_m\right)]=\int_0^\infty \mathbb{P}(\mathrm{SINR}_m>e^t-1)dt,
\\
&
{\mathcal{R}^{(\mathrm{I})}_m
=\sum_{b\in \mathcal{B}} A
\int_0^\infty \mathbb{P}(\mathrm{SINR}_{m,\mathbf{b}}>e^t-1)dt},
\\
&
\mathcal{R}^{\mathrm{worst}}_m
={p}^{\mathrm{worst}}_{(m)}\mathcal{R}^{(\mathrm{P})}_m.
\end{align*}
Substituting $e^{t}-1$  in place of $\gamma_m$, we can integrate all coverage probability expressions over $t$ to evaluate the average rate of a user at rank $m$ numerically.
In addition, substituting $m=1$ and $m=\bar{c}$, we can characterize the performance of the closest and farthest user in a NOMA cluster, respectively.
\subsubsection{Mean Rate Coverage of a Cluster}
Although the individual rate coverage of a user in the representative cluster is a useful metric, it does not offer a complete insight related to the cumulative performance of the users in the representative NOMA cluster. As such, we consider analyzing the collective performance of all users  by defining the mean rate coverage of all users in the representative NOMA cluster as follows:
\begin{equation}
\mathcal{O}=\sum_{m=1}^{\bar{c}} \frac{\mathcal{C}^{(\cdot)}_{m}}{\bar{c}},
\end{equation}
where $(\cdot)= \mathrm{P}, \mathrm{I}$, and $\mathrm{worst}$ for perfect SIC, imperfect SIC, and worst case SIC, respectively.


\section{Relevant Distance Distributions for the Characterization of Interference}
In this section, we  characterize the relevant distributions of the distances  between the BS of a representative cluster and different intra-cell user devices that are ordered according to their distances. These distance distributions are crucial for deriving  the Laplace transforms of the  intra-cluster interferences  and the rate coverage analysis.

\begin{figure}[h]
\begin{center}
\includegraphics[width = 3.5in]{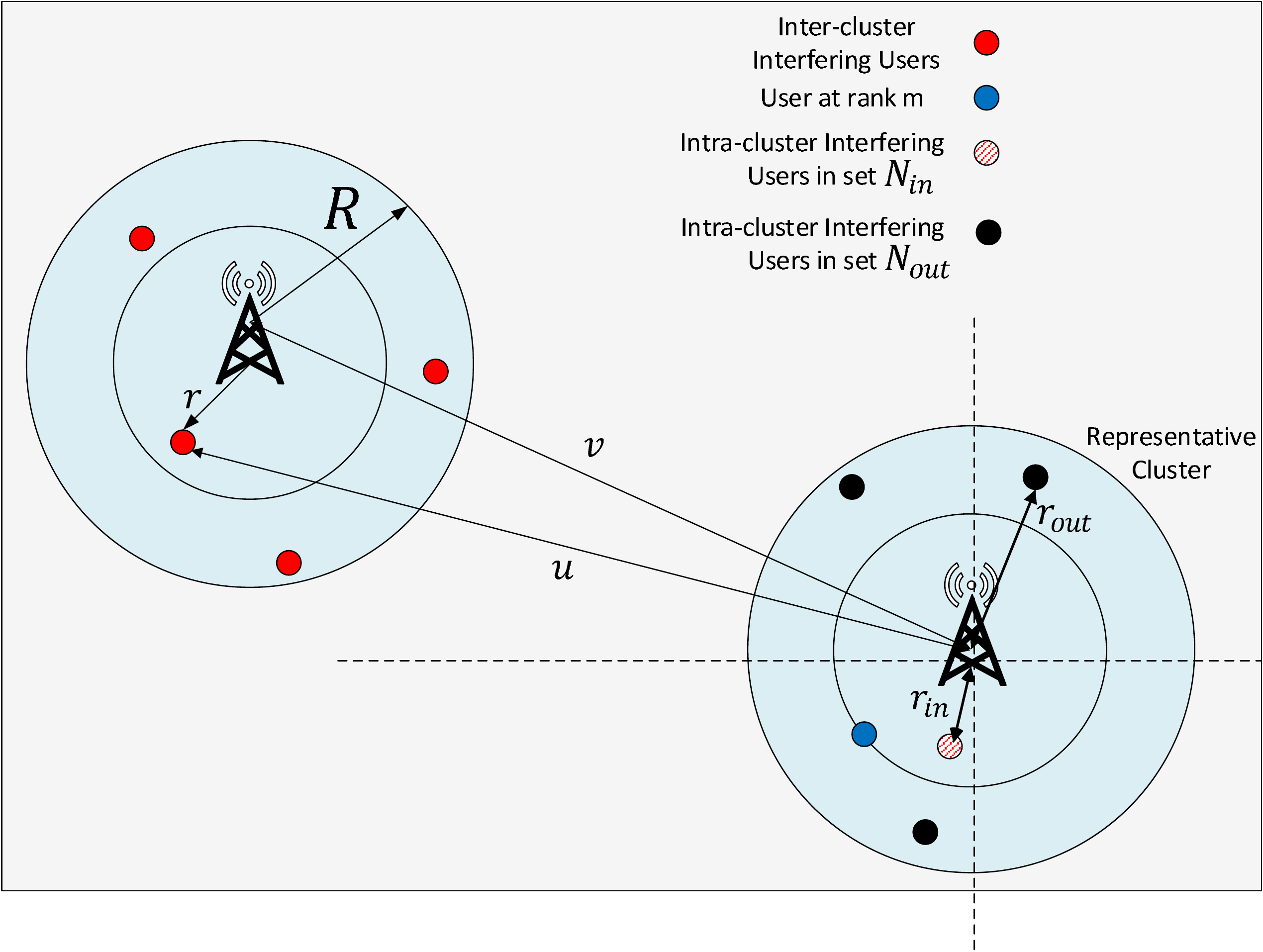}
\caption[c] { Illustration of the representative cluster, intra-cluster, and inter-cluster interferers along with their relevant distances.}
\label{faircomp2}
\end{center}
\end{figure}

As mentioned in the Approximation, we consider ranking the users in terms of their distances from the serving BS, which is located at the cluster center. 
Consequently, we characterize  the  distance distributions of the $m^{\mathrm{th}}$ closest node  as well as its respective intra-cluster interfering nodes from the BS. Note that the distance of any arbitrary intra-cluster device located at $y \in \mathbb{R}^2$ with respect to $x_0\in \mathbb{R}^2$ is i.i.d. and follows the uniform distribution given in \eqref{fy}. 
Subsequently, the distance $||y||=r$ of any arbitrary intra-cluster device from the representative BS follows a sampling distribution given by $f_r(r)$ in \eqref{fr}. Now we order $\bar{c}$ devices within the representative cluster with respect to the cluster center  such that $r_{(1)} \leq r_{(2)}\leq \cdots r_{(\bar{c})}$. The  distance distribution of the user  at rank  $m$ can thus be given as follows.

\begin{lemma}[Distribution of the Distance of the User at rank $m$] For a Matern cluster process, the distribution of the distance of a user at rank $m$ in the representative cluster, i.e., $r_{(m)}$ from its serving BS follows a Generalized Beta (GB) distribution of the first kind. The distance distribution can be derived as:
\begin{equation*}
f_{r_{(m)}}(r)= \mathrm{GB} (r,2,R,m,\bar{c}-m+1)=\frac{2 r^{2m-1} \left(1-\frac{r^2}{R^2}\right)^{\bar{c}-m}}{R^{2m} {B}(m,\bar{c}-m+1)}, 
\end{equation*}
where $B(\cdot)$ denotes the Euler Beta function.
GB distribution includes Beta distribution, Generalized Gamma distribution, and Pareto distribution as its special cases.
\end{lemma}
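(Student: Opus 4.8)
The plan is to reduce the claim to the classical density formula for the order statistics of an i.i.d. sample. First I would observe that, conditioned on the representative cluster containing exactly $\bar c$ users (the working assumption adopted in the paper), the positions $y_1,\dots,y_{\bar c}$ are i.i.d.\ uniform on the disc of radius $R$ by the definition of the Matern cluster process, so the radial distances $r_i=||y_i||$ are i.i.d.\ with density $f_r$ in \eqref{fr} and cumulative distribution function $F_r(r)=r^2/R^2$ for $0\le r\le R$. Ordering them as $r_{(1)}\le\cdots\le r_{(\bar c)}$, the quantity $r_{(m)}$ is precisely the $m$-th order statistic of an i.i.d.\ sample of size $\bar c$ drawn from $F_r$.

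Next I would invoke (or re-derive in one line) the standard order-statistic density: for $0\le r\le R$,
\begin{equation*}
f_{r_{(m)}}(r)=\frac{\bar c!}{(m-1)!\,(\bar c-m)!}\,\big(F_r(r)\big)^{m-1}\big(1-F_r(r)\big)^{\bar c-m}f_r(r).
\end{equation*}
If a self-contained argument is preferred, I would instead start from the identity of events $\{r_{(m)}\le r\}=\{\text{at least }m\text{ of the }\bar c\text{ distances lie in }[0,r]\}$, write $\mathbb{P}(r_{(m)}\le r)=\sum_{k=m}^{\bar c}\binom{\bar c}{k}F_r(r)^k(1-F_r(r))^{\bar c-k}$, and differentiate in $r$; the sum telescopes to the displayed product.

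Finally I would substitute $F_r(r)=r^2/R^2$ and $f_r(r)=2r/R^2$ and simplify, using the Beta-function identity $B(m,\bar c-m+1)=(m-1)!\,(\bar c-m)!/\bar c!$ to turn the multinomial coefficient into $1/B(m,\bar c-m+1)$; collecting powers of $r$ and $R$ yields exactly $f_{r_{(m)}}(r)=2r^{2m-1}(1-r^2/R^2)^{\bar c-m}/(R^{2m}B(m,\bar c-m+1))$. Comparing with the standard form of the generalized Beta density of the first kind, $\mathrm{GB}(r;a,b,p,q)=|a|\,r^{ap-1}(1-(r/b)^a)^{q-1}/(b^{ap}B(p,q))$, identifies the parameters as $a=2$, $b=R$, $p=m$, $q=\bar c-m+1$, giving the claimed $\mathrm{GB}(r,2,R,m,\bar c-m+1)$; specializing $a$, $p$, $q$ recovers the Beta, generalized Gamma, and Pareto densities as special cases. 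I do not anticipate a genuine obstacle here: the only points needing care are the reduction to an i.i.d.\ sample (which rests on the independence built into the cluster process and on conditioning on the user count) and the bookkeeping required to match the generalized Beta parametrization.
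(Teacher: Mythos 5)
Your proposal is correct and follows essentially the same route as the paper's Appendix A: invoke the standard order-statistic density for $\bar c$ i.i.d.\ distances with $f_r(r)=2r/R^2$ and $F_r(r)=r^2/R^2$, simplify via the Beta-function identity, and match the generalized Beta parametrization with $a=2$, $b=R$, $p=m$, $q=\bar c-m+1$. The only difference is cosmetic: you optionally re-derive the order-statistic formula from the counting identity, whereas the paper simply cites it.
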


\begin{proof}
See \textbf{Appendix A}.
\end{proof}

Provided that the user of interest is at rank $m$, we now characterize the distribution of the distances of its corresponding intra-cluster interfering devices.
Note that the possible interfering nodes for a user at rank $m$ can lie at any place (depending on the perfect and imperfect SIC) except the location of the user at rank $m$. 
As such, the  set  of intra-cluster interferers can be partitioned
into two subsets, i.e., $\mathcal{N}^{x_0}_{\mathrm{in}} \in \{1, 2, \cdots, m-1\}$ and 
$\mathcal{N}^{x_0}_{\mathrm{out}} \in \{ m+1, \cdots, \bar{c}\}$ where
$\mathcal{N}^{x_0}_{\mathrm{in}}$ and 
$\mathcal{N}^{x_0}_{\mathrm{out}}$
represent  the  set  of  interfering  nodes  closer and  farther  to  the  reference  BS,  respectively,  compared  to  the  user at rank $m$. This set-up is demonstrated in Fig.~2. 
Consequently,  the distances from the representative BS to  devices in $\mathcal{N}^{x_0}_{\mathrm{in}}$ are i.i.d as shown by {\em Afshang et al.} in \cite{afshang1,afshang2,afshang3,afshang4}. The same property holds for the users in  $\mathcal{N}^{x_0}_{\mathrm{out}}$.

\begin{lemma}[Conditional Distribution of the Distances of Intra-Cluster Interfering Users]
Conditioned on the distance of the user of interest at rank $m$ (say $r_{(m)}=\hat{r}$),  the distribution of the distance $r_{\mathrm{in}}$ of any device in the set $\mathcal{N}^{x_0}_{\mathrm{in}} \in \{1, 2, \cdots, m-1\}$ from cluster center (BS) is i.i.d and can therefore be given by:
\begin{equation}\label{lemma2a}
f_{r_{\mathrm{in}}}(r_{\mathrm{in}}|\hat{r})=\frac{f_r(r_{\mathrm{in}})}{F_r(\hat{r})}=\frac{2 r_{\mathrm{in}}}{\hat{r}^2}, \quad r_{\mathrm{in}}<\hat{r}.
\end{equation}
Similarly,  the distribution of the distance of any device in the set $\mathcal{N}^{x_0}_{\mathrm{out}} \in \{m+1, \cdots, \bar{c}\}$ from cluster center (BS) is conditionally i.i.d. and can be given as:
\begin{equation}\label{lemma2b}
f_{r_{\mathrm{out}}}(r_{\mathrm{out}}|\hat{r})=\frac{f_r(r_{\mathrm{out}})}{1-F_r(\hat{r})}=\frac{2 r_{\mathrm{out}}}{R^2-\hat{r}^2}, \quad R\geq r_{\mathrm{out}}>\hat{r},
\end{equation}
where $r_{\mathrm{in}}$ and $r_{\mathrm{out}}$ are conditionally independent.
\end{lemma}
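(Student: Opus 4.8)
The plan is to obtain \eqref{lemma2a} and \eqref{lemma2b} from the elementary theory of order statistics of i.i.d.\ continuous random variables, specialised to the radial density $f_r(r)=2r/R^{2}$ and its cdf $F_r(r)=r^{2}/R^{2}$ on $[0,R]$ from \eqref{fr}. Since the $\bar c$ intra-cluster distances $r_1,\dots,r_{\bar c}$ are i.i.d.\ with density $f_r$, the joint density of their order statistics on the simplex $0\le r_{(1)}<\cdots<r_{(\bar c)}\le R$ is $\bar c!\,\prod_{i=1}^{\bar c} f_r(r_{(i)})$, and the marginal density of $r_{(m)}$ is the Generalized Beta density established in the previous lemma. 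Dividing the former by the latter gives the conditional joint density of $(r_{(1)},\dots,r_{(m-1)},r_{(m+1)},\dots,r_{(\bar c)})$ given $r_{(m)}=\hat r$; after cancelling the powers of $R$ and regrouping the surviving factors, it equals
\begin{equation*}
\Big[(m-1)!\prod_{i=1}^{m-1}\tfrac{2r_{(i)}}{\hat r^{2}}\Big]\cdot\Big[(\bar c-m)!\prod_{i=m+1}^{\bar c}\tfrac{2r_{(i)}}{R^{2}-\hat r^{2}}\Big]
\end{equation*}
on $0\le r_{(1)}<\cdots<r_{(m-1)}<\hat r$ and $\hat r<r_{(m+1)}<\cdots<r_{(\bar c)}\le R$. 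The first bracket is precisely the joint density of the order statistics of $m-1$ i.i.d.\ draws from $f_r$ truncated to $(0,\hat r)$, and the second that of $\bar c-m$ i.i.d.\ draws from $f_r$ truncated to $(\hat r,R)$.

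This factorisation does the whole job at once: it exhibits the conditional independence of the ``inner'' group $\mathcal N^{x_0}_{\mathrm{in}}$ and the ``outer'' group $\mathcal N^{x_0}_{\mathrm{out}}$, and it shows that once the (purely bookkeeping) ordering within each group is dropped, the members of $\mathcal N^{x_0}_{\mathrm{in}}$ are conditionally i.i.d.\ and so are the members of $\mathcal N^{x_0}_{\mathrm{out}}$ — the property attributed to Afshang \emph{et al.} Consequently the distance $r_{\mathrm{in}}$ of any member of $\mathcal N^{x_0}_{\mathrm{in}}$ has the truncated marginal $f_r(r_{\mathrm{in}})/F_r(\hat r)$, and substituting $f_r(r_{\mathrm{in}})=2r_{\mathrm{in}}/R^{2}$, $F_r(\hat r)=\hat r^{2}/R^{2}$ yields $2r_{\mathrm{in}}/\hat r^{2}$ for $r_{\mathrm{in}}<\hat r$, i.e.\ \eqref{lemma2a}; likewise $r_{\mathrm{out}}$ has density $f_r(r_{\mathrm{out}})/\big(1-F_r(\hat r)\big)=2r_{\mathrm{out}}/(R^{2}-\hat r^{2})$ on $(\hat r,R]$, i.e.\ \eqref{lemma2b}, and conditional independence of $r_{\mathrm{in}}$ and $r_{\mathrm{out}}$ is inherited from the factorisation.

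The only point that needs a little care is the combinatorial bookkeeping: one must check that $\bar c!$ divided by $1/B(m,\bar c-m+1)=\bar c!/\big((m-1)!(\bar c-m)!\big)$, together with the $R$-powers in the Generalized Beta density, leaves exactly the constants $(m-1)!$ and $(\bar c-m)!$ needed to normalise the two truncated order-statistic densities, and that the conclusion is phrased in the ``pick an arbitrary member'' form used in the lemma rather than in terms of the ordered values. Everything past that is a one-line substitution. An equivalent shortcut avoids order statistics entirely: conditioning on $r_{(m)}=\hat r$ forces exactly $m-1$ of the i.i.d.\ points to fall at distance below $\hat r$ and $\bar c-m$ above it, and by exchangeability, conditioned on a point lying below $\hat r$ its distance has law $f_r/F_r(\hat r)$ independently across such points, and symmetrically above, which gives \eqref{lemma2a}--\eqref{lemma2b} directly.
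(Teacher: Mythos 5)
Your proof is correct and follows essentially the same route as the paper's Appendix~B: write the joint density of the order statistics of the $\bar c$ i.i.d.\ distances, condition on $r_{(m)}=\hat r$, observe that the conditional joint density factors into products of truncated densities (with the factorials $(m-1)!$ and $(\bar c-m)!$), and drop the within-group ordering to conclude the members of $\mathcal N^{x_0}_{\mathrm{in}}$ and $\mathcal N^{x_0}_{\mathrm{out}}$ are conditionally i.i.d.\ with the truncated laws \eqref{lemma2a}--\eqref{lemma2b}. The only cosmetic difference is that you treat both groups simultaneously by dividing by the Lemma~1 marginal (which also makes the in/out conditional independence explicit), whereas the paper integrates out the inner variables first and handles the outer set, invoking symmetry for the inner one.
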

\begin{proof}
The proof can be done along the same lines as shown in \cite{afshang1}. For sake of completeness of the paper, we discuss it briefly in \textbf{Appendix B}.
\end{proof}
At this point, it is noteworthy that {\bf Lemma~2} utilizes the fact that the ordering of users in the set $\mathcal{N}^{x_0}_{\mathrm{out}}$ does not have any impact on the cumulative interference generated from all users of this set.
Therefore,   the interfering devices in the set $\mathcal{N}^{x_0}_{\mathrm{out}}$ can be sampled randomly without any specific ordering. Consequently, the  distance distributions of the  interfering users belonging to  set  $\mathcal{N}^{x_0}_{\mathrm{out}}$ are i.i.d. and can be given as in {\bf Lemma~2}. The same statements hold for the users in set $\mathcal{N}^{x_0}_{\mathrm{in}}$.

Nonetheless, for imperfect SIC scenarios, we need to include  specific interferences from the ranked users in set $\mathcal{N}^{x_0}_{\mathrm{in}}$ whose signals go undetected.  As such, for each combination $\mathbf{b}$, we need to include additional intra-cluster interferences from the specific users whose corresponding $b(i)=1$ in set $\mathcal{N}^{x_0}_{\mathrm{in}}$.  Given that the user of interest is at rank $m$, the distance distribution 
of a user at specific  rank $j<m$, can be derived by exploiting the conditional ranked distributions from the theory of order statistics as in the following.
\begin{lemma}[Conditional Distribution of the Distance of an Intra-Cluster Interfering User at Rank $j$]
Given that the user of interest is at rank $m$ (say $r_{(m)}=\hat{r}$),  the conditional distribution of the distance of $j$-th rank user, such that $j>m$ and $r_{(j)} > r_{(m)}=\hat r$, is the distribution of the $j-m$-th order statistics of $\bar{c}-m$ i.i.d. distance variables whose PDF  is the truncated PDF of $r$ as given in \eqref{lemma2b}. The conditional distance distribution of $j$-th rank user can thus be given as: 
\begin{equation}
f_{r_{(j)}|r_{(m)}=\hat{r}}(\gamma_j)=
\frac{2 \gamma_j (\bar{c}-m)! (\gamma_j^2-\hat{r}^2)^{j-m-1}
(R^2-\gamma_j^2)^{\bar{c}-j}
}
{\Gamma(j-m)(\bar c-j)! (R^2-\hat{r}^2)^{\bar{c}-m}}.
\end{equation}
Accordingly, the conditional distribution of the distance of $j$-th rank user, such that $j<m$ and $r_{(j)} < r_{(m)}=\hat r$, is the same as the distribution of the $j$-th order statistics of $m-1$ i.i.d. distance variables whose PDF is the truncated PDF of  $r$ as given in \eqref{lemma2a}. The conditional distance distribution of $j$-th rank user can thus be given as: 
\begin{equation}
f_{r_{(j)}|r_{(m)}=\hat{r}}(\gamma_j)=
\frac{2\Gamma(m) (\hat{r}^2-{\gamma_j^2})^{m-j-1}
(\gamma_j)^{2j-1}
}{\Gamma(m-j) \Gamma(j) \hat{r}^{2m-2}}.
\end{equation} 
\end{lemma}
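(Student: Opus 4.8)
The plan is to reduce both claims to the textbook density of a single order statistic of i.i.d.\ samples, after invoking the conditional decoupling that \textbf{Lemma~2} already provides. First I would recall from \textbf{Lemma~2} that, conditioned on $r_{(m)}=\hat{r}$, the $m-1$ users in $\mathcal{N}^{x_0}_{\mathrm{in}}$ have i.i.d.\ distances with density $f_{r_{\mathrm{in}}}(\cdot\,|\,\hat{r})$ of \eqref{lemma2a} (supported on $(0,\hat{r})$, with CDF $F_{\mathrm{in}}(x\,|\,\hat{r})=x^2/\hat{r}^2$), the $\bar{c}-m$ users in $\mathcal{N}^{x_0}_{\mathrm{out}}$ have i.i.d.\ distances with density $f_{r_{\mathrm{out}}}(\cdot\,|\,\hat{r})$ of \eqref{lemma2b} (supported on $(\hat{r},R]$, with CDF $F_{\mathrm{out}}(x\,|\,\hat{r})=(x^2-\hat{r}^2)/(R^2-\hat{r}^2)$), and that the two groups are conditionally independent. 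Because the global ordering of all $\bar{c}$ distances is split by $r_{(m)}$, a user at global rank $j>m$ is exactly the $(j-m)$-th smallest among the $\bar{c}-m$ ``outer'' distances, while a user at global rank $j<m$ is exactly the $j$-th smallest among the $m-1$ ``inner'' distances. Hence both target densities are ordinary order-statistic densities on truncated supports.

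Second, for $j>m$ I would substitute $n=\bar{c}-m$ and $k=j-m$ into the classical formula $f_{X_{(k)}}(x)=\frac{n!}{(k-1)!\,(n-k)!}\,F(x)^{k-1}\,(1-F(x))^{n-k}\,f(x)$ with $F=F_{\mathrm{out}}(\cdot\,|\,\hat{r})$; noting $1-F_{\mathrm{out}}(x\,|\,\hat{r})=(R^2-x^2)/(R^2-\hat{r}^2)$, the powers of $(R^2-\hat{r}^2)$ contributed by $F^{k-1}$, $(1-F)^{n-k}$ and $f$ have exponents summing to $-(\bar{c}-m)$, and writing $(k-1)!=\Gamma(j-m)$, $(n-k)!=(\bar{c}-j)!$, $n!=(\bar{c}-m)!$ collapses the product to the stated expression. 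For $j<m$ I would instead take $n=m-1$, $k=j$, $F=F_{\mathrm{in}}(\cdot\,|\,\hat{r})$ with $1-F_{\mathrm{in}}(x\,|\,\hat{r})=(\hat{r}^2-x^2)/\hat{r}^2$; the powers of $\hat{r}$ then total $-(2m-2)$, and writing $n!=\Gamma(m)$, $(k-1)!=\Gamma(j)$, $(n-k)!=\Gamma(m-j)$ yields the second claimed formula. Both reductions are purely algebraic bookkeeping of exponents and factorials.

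The one genuinely substantive step --- the one I would present with care --- is the very first: the claim that conditioning on $r_{(m)}=\hat{r}$ turns the distance of the rank-$j$ interferer into an order statistic of i.i.d.\ draws from a truncated parent law. This rests on the standard conditional-independence (Markov) structure of order statistics of an i.i.d.\ sample: given the value of one order statistic, the smaller values form an i.i.d.\ sample from the parent law restricted below it, the larger values an i.i.d.\ sample from the parent law restricted above it, and the two groups are independent. The paper already imports this fact through \textbf{Lemma~2} and the results of Afshang \emph{et al.} Once the decoupling is granted, the remainder is just: (i) identify global rank $j$ with local rank $j-m$ in the outer group (respectively $j$ in the inner group); (ii) plug the appropriate truncated CDF/PDF into $f_{X_{(k)}}$; (iii) simplify. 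I would carry out the short computation in an appendix, mirroring the treatment of \textbf{Lemmas~1} and \textbf{2}.
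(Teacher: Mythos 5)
Your proposal is correct and follows essentially the same route as the paper: it conditions on $r_{(m)}=\hat r$ via Lemma~2's truncated i.i.d.\ distance laws, identifies global rank $j$ with local rank $j-m$ (outer group) or $j$ (inner group), and substitutes the truncated PDF/CDF into the standard single-order-statistic density \eqref{A1} with the parameter replacements $\bar c\to\bar c-m$, $m\to j-m$ (resp.\ $\bar c\to m-1$, $m\to j$), exactly as in the paper's proof. Your extra care in justifying the conditional decoupling (the Markov property of order statistics) is a sound elaboration of what the paper imports implicitly through Lemma~2, and your exponent/factorial bookkeeping checks out against the stated formulas.
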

\begin{proof}
The proof of the first part of the lemma follows by substituting the PDF from \eqref{lemma2b} and  its corresponding CDF into \eqref{A1} and replacing $\bar{c}$ with $\bar{c}-m$ and $m$ with $j-m$. Similarly, the  proof of the second part of the lemma follows by substituting the PDF from \eqref{lemma2a} and  its corresponding CDF into \eqref{A1} and replacing $\bar{c}$ with $m-1$ and $m$ with $j$.
\end{proof}

\section{Rate Coverage Analysis}
Using the distance distributions  derived in Section~IV, in this section, we  derive the Laplace transforms  of the intra-cluster interferences experienced by the transmission of  $m$-th rank user  considering both the perfect and imperfect SIC scenarios. We then derive the Laplace transform of the inter-cluster interference incurred at the representative BS by exploiting the distance distributions from geometric probability and a less complex upper bound of the Laplace Transform of the inter-cluster interference. Finally, we derive the rate coverage expressions for NOMA as well as OMA system.

\subsection{Laplace Transforms of the Intra-Cluster Interferences}

As defined in \eqref{perfectI}, it is evident that the intra-cluster interference with perfect SIC is due to all users that are located beyond the distance $\hat{r}=r_{(m)}$. Subsequently, the intra-cluster interference  defined in \eqref{perfectI} can  be rewritten as follows:
\begin{equation}\label{perfectII}
I^{m}_{\mathrm{intra}}= 
\sum_{y \in \mathcal{N}^{x_0}_{\mathrm{out}}}  P_u h_{y_{x_0}} ||y||^{-\alpha}.
\end{equation}
The Laplace transform of $I^{m}_{\mathrm{intra}}$ can then be given as follows.
\begin{lemma}[Laplace transform of the Intra-Cluster Interference with Perfect SIC] The intra-cluster interference experienced by the transmission of $m$-th ranked user in the cluster, with perfect SIC, can be given as follows:
\begin{equation}
\mathcal{L}_{I^{m}_{\mathrm{intra}}}(s)= 
\left(
\frac{2 s^{\frac{2}{\alpha}} \tilde{\mathbf{B}}[-\frac{R^\alpha}{s},-\frac{\hat{r}^\alpha}{s}, \frac{2+\alpha}{\alpha}, 0]
}{(R^2-\hat{r}^2)(-1)^{\frac{2}{\alpha}}\alpha}
\right)^{\bar{c}-m},
\end{equation} 
where $\tilde{\mathbf{B}}(\cdot)$ is the generalized incomplete Beta function.
\end{lemma}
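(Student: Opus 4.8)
The plan is to evaluate $\mathcal{L}_{I^{m}_{\mathrm{intra}}}(s)=\mathbb{E}\!\left[e^{-s I^{m}_{\mathrm{intra}}}\right]$ directly from the product representation in \eqref{perfectII}, conditioned on the distance $r_{(m)}=\hat{r}$ of the user of interest, and to reduce it to a single-node integral using the independence structure established in Section~IV. First I would write
$$
\mathcal{L}_{I^{m}_{\mathrm{intra}}}(s)=\mathbb{E}\!\left[\prod_{y\in\mathcal{N}^{x_0}_{\mathrm{out}}}\exp\!\left(-sP_u h_{y_{x_0}}\|y\|^{-\alpha}\right)\right].
$$
Since $|\mathcal{N}^{x_0}_{\mathrm{out}}|=\bar{c}-m$ is deterministic (there are $\bar{c}$ users per cluster and the user of interest is at rank $m$), the fading coefficients $h_{y_{x_0}}$ are i.i.d.\ and independent of the point process, and by \textbf{Lemma~2} the distances $\{\|y\|:y\in\mathcal{N}^{x_0}_{\mathrm{out}}\}$ are conditionally i.i.d.\ with the truncated density $f_{r_{\mathrm{out}}}(\cdot\,|\hat{r})$ in \eqref{lemma2b}, the expectation of the product factorizes into a power of a single-node expectation:
$$
\mathcal{L}_{I^{m}_{\mathrm{intra}}}(s)=\Big(\mathbb{E}_{r,h}\big[e^{-sP_u h r^{-\alpha}}\big]\Big)^{\bar{c}-m}.
$$

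Next I would carry out the two expectations in sequence. Averaging over $h\sim\exp(1)$ gives the standard Rayleigh-fading factor $\mathbb{E}_h[e^{-sP_u h r^{-\alpha}}]=(1+sP_u r^{-\alpha})^{-1}$; taking $P_u=1$ without loss of generality (equivalently, absorbing $P_u$ into $s$) and then averaging over $r$ with the truncated density \eqref{lemma2b} leaves
$$
\mathbb{E}_{r,h}\big[e^{-s h r^{-\alpha}}\big]=\frac{2}{R^2-\hat{r}^2}\int_{\hat{r}}^{R}\frac{r^{\alpha+1}}{r^{\alpha}+s}\,dr.
$$
The remaining work is to put this integral in closed form: the substitution $u=r^{\alpha}$ followed by $t=-u/s$ converts it into $\int t^{(2+\alpha)/\alpha-1}(1-t)^{-1}\,dt$ over the limits $t=-R^{\alpha}/s$ and $t=-\hat{r}^{\alpha}/s$, which is precisely the generalized incomplete Beta function $\tilde{\mathbf{B}}\!\left[-R^\alpha/s,-\hat{r}^\alpha/s,(2+\alpha)/\alpha,0\right]$, while the Jacobian and the powers of $-s$ produced by the two substitutions assemble into the prefactor $\frac{2 s^{2/\alpha}}{(R^2-\hat{r}^2)(-1)^{2/\alpha}\alpha}$. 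Raising the resulting single-node expression to the power $\bar{c}-m$ yields the claimed formula.

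I expect the main obstacle to be this last integral, and in particular the bookkeeping of branches: one must keep the branch of $(-s)^{2/\alpha}$ and the orientation of the incomplete-Beta limits mutually consistent so that the (in general complex-looking) expression $\tilde{\mathbf{B}}[\cdot]$ reduces to the real value of $\int_{\hat{r}}^{R} r^{\alpha+1}/(r^{\alpha}+s)\,dr$ and matches the stated prefactor verbatim. A secondary point worth stating explicitly is that the formula is conditional on $r_{(m)}=\hat{r}$; whenever the unconditional intra-cluster Laplace transform is needed in the rate-coverage analysis, it is recovered by de-conditioning against $f_{r_{(m)}}$ from \textbf{Lemma~1}. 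Everything else --- the factorization into the $(\bar{c}-m)$-th power and the exponential average --- is routine once \textbf{Lemma~2} is in hand.
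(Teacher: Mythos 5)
Your proposal is correct and follows essentially the same route as the paper's Appendix~C: the product form of the Laplace transform, averaging the exponential fading to get $\left(1+sP_u\|y\|^{-\alpha}\right)^{-1}$, invoking the conditional i.i.d.\ distance density of \textbf{Lemma~2} to factor into the $(\bar{c}-m)$-th power of a single integral over $[\hat{r},R]$. The only cosmetic difference is the final integral: the paper first writes it via Gauss hypergeometric functions $\,_2F_1$ and then converts to the generalized incomplete Beta function through the identity $\,_2F_1[a,b,b+1,z]=b z^{-b}\mathbf{B}_z(b,1-a)$, whereas you reach the same incomplete-Beta form by direct substitution, with the same branch/prefactor bookkeeping you already flag.
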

\begin{proof}
See \textbf{Appendix~C}.
\end{proof}
For cases of practical interest such as $\alpha=4$, the result  presented in the Lemma~4 can be further simplified as follows.
\begin{corollary}[Laplace transform of the Intra-Cluster Interference with Perfect SIC, $\alpha=4$]
When $\alpha=4$, the Laplace transform of the intra-cluster interference experienced by the transmission of $m$-th ranked user in the cluster, with perfect SIC, can be further simplified as follows.
\begin{align*}
\mathcal{L}_{I^{m}_{\mathrm{intra}}}(s)=&\Scale[1]{
\left(
1 - \frac{\sqrt{P_u s}\left(\mathrm{tan}^{-1}\left(\frac{\hat{r}^{2}}{\sqrt{P_u s}}\right)-\mathrm{tan}^{-1}\left(\frac{R^{2}}{\sqrt{P_u s}}\right)\right)}{\hat{r}^2-R^2}\right)^{\bar{c}-m}},
\\
&\stackrel{(a)}{=}\left(1 - \frac{\sqrt{P_u s} \mathrm{tan}^{-1} \left( \frac{\sqrt{P_u s}(\hat{r}^2-R^2)}{P_u s +\hat{r}^2 R^2} \right)}{\hat{r}^2-R^2}\right)^{\bar{c}-m},
\end{align*}
where (a) is derived by using the property $\mathrm{tan}^{-1}(x)-\mathrm{tan}^{-1}(y)=\mathrm{tan}^{-1}\left(\frac{x-y}{1+xy}\right)$.
Note that, if $m=\bar{c}$, which represents the farthest user from the BS, there is no intra-cluster interference since $\bar c -m=0$. 
\end{corollary}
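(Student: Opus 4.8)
The plan is to obtain the $\alpha=4$ expression by re-evaluating the single–interferer integral that underlies \textbf{Lemma~4} directly, rather than by specializing the generalized incomplete Beta function $\tilde{\mathbf{B}}(\cdot)$. Recall from \eqref{perfectII} that, with perfect SIC, $I^{m}_{\mathrm{intra}}$ is a sum over the $\bar c-m$ nodes in $\mathcal{N}^{x_0}_{\mathrm{out}}$, whose distances are conditionally i.i.d.\ with density $f_{r_{\mathrm{out}}}(r\mid\hat r)=\frac{2r}{R^2-\hat r^2}$ on $(\hat r,R]$ by \textbf{Lemma~2}. Averaging the unit-mean exponential fading via $\mathbb{E}_h[e^{-sP_u h r^{-\alpha}}]=(1+sP_u r^{-\alpha})^{-1}$ and using conditional independence of the $\bar c - m$ interferers, the conditional Laplace transform given $r_{(m)}=\hat r$ is
\begin{equation*}
\mathcal{L}_{I^{m}_{\mathrm{intra}}}(s)=\left(\frac{2}{R^2-\hat r^2}\int_{\hat r}^{R}\frac{r\,dr}{1+sP_u r^{-\alpha}}\right)^{\bar c-m}=\left(\frac{2}{R^2-\hat r^2}\int_{\hat r}^{R}\frac{r^{\alpha+1}}{r^{\alpha}+sP_u}\,dr\right)^{\bar c-m},
\end{equation*}
which is exactly the quantity \textbf{Appendix~C} evaluates for general $\alpha$ (and agrees with \textbf{Lemma~4} after the rescaling $s\mapsto P_u s$).

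For $\alpha=4$ the inner integral is elementary. First I would substitute $u=r^{2}$, reducing it to $\frac{1}{R^2-\hat r^2}\int_{\hat r^{2}}^{R^{2}}\frac{u^{2}}{u^{2}+sP_u}\,du$; then split $\frac{u^{2}}{u^{2}+sP_u}=1-\frac{sP_u}{u^{2}+sP_u}$ and integrate the second term using $\int\frac{du}{u^{2}+a^{2}}=\frac1a\,\mathrm{tan}^{-1}(u/a)$ with $a=\sqrt{sP_u}$. This gives
\begin{equation*}
\int_{\hat r^{2}}^{R^{2}}\frac{u^{2}}{u^{2}+sP_u}\,du=(R^{2}-\hat r^{2})-\sqrt{sP_u}\left(\mathrm{tan}^{-1}\left(\frac{R^{2}}{\sqrt{sP_u}}\right)-\mathrm{tan}^{-1}\left(\frac{\hat r^{2}}{\sqrt{sP_u}}\right)\right),
\end{equation*}
and dividing by $R^{2}-\hat r^{2}$, absorbing the overall sign so that the denominator reads $\hat r^{2}-R^{2}$ and the bracket becomes $\mathrm{tan}^{-1}(\hat r^2/\sqrt{P_u s})-\mathrm{tan}^{-1}(R^2/\sqrt{P_u s})$, then raising to the power $\bar c-m$, yields the first displayed form in the statement.

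For step (a), I would apply $\mathrm{tan}^{-1}(x)-\mathrm{tan}^{-1}(y)=\mathrm{tan}^{-1}\left(\frac{x-y}{1+xy}\right)$ with $x=\hat r^{2}/\sqrt{P_u s}$ and $y=R^{2}/\sqrt{P_u s}$; since $x,y>0$ we have $1+xy>0$, so no branch correction is needed over the whole range $0<\hat r<R$, and clearing denominators inside the arctangent gives $\mathrm{tan}^{-1}\left(\frac{\sqrt{P_u s}\,(\hat r^{2}-R^{2})}{P_u s+\hat r^{2}R^{2}}\right)$, i.e.\ the second form. The closing remark is immediate: if $m=\bar c$ the exponent $\bar c-m$ equals $0$, so the product is $1$ for every $s$, meaning zero intra-cluster interference. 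There is no real obstacle here beyond (i) recognizing that routing through $\tilde{\mathbf{B}}(\cdot)$ is unnecessary and the $\alpha=4$ integral should be done from scratch, and (ii) tracking the sign when rewriting $R^{2}-\hat r^{2}$ as $\hat r^{2}-R^{2}$ and checking $1+xy>0$ so that the arctangent subtraction identity holds on the full range; convergence is automatic because $[\hat r,R]$ is bounded.
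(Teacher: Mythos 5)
Your proof is correct, and it reaches the corollary by a slightly different route than the paper intends: the paper obtains Corollary~1 by specializing the general-$\alpha$ result of Lemma~4 (whose Appendix~C proof evaluates the inner integral via Gauss hypergeometric functions and then rewrites them as a generalized incomplete Beta function), whereas you set $\alpha=4$ first and evaluate the same inner integral elementarily via $u=r^2$ and the $\mathrm{tan}^{-1}$ antiderivative. The structural part of your argument --- conditioning on $r_{(m)}=\hat r$, using the conditionally i.i.d.\ truncated distance density $\frac{2r}{R^2-\hat r^2}$ from Lemma~2, averaging the unit-mean exponential fading to get $(1+sP_ur^{-\alpha})^{-1}$, and raising the single-interferer integral to the power $\bar c-m$ --- is exactly the paper's Appendix~C setup, so the difference is only in how the final integral is closed. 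Your route buys two things: it avoids the $\,_2F_1$/incomplete-Beta machinery entirely for the practically relevant case (the paper never actually displays the $\alpha=4$ specialization steps), and your sign bookkeeping plus the observation that $1+xy>0$ justifies the arctangent subtraction identity on the whole range $0<\hat r<R$ makes step (a) airtight; you also correctly flag that Lemma~4's statement omits $P_u$ (absorbed into $s$) while the corollary restores it, which your rescaling remark reconciles. No gaps.
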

Further, an accurate 
second-order approximation of $\mathrm{tan}^{-1}(x)$ 
with  a  maximum  absolute  error  of
0.0053 rad  can be derived as 
$
\mathrm{tan}^{-1}(x) \approx \frac{\pi}{4}x+0.273x|1-x|
$.
With this approximation, the closed-form expression presented in Corollary~1 can be further simplified.

The intra-cluster interference incurred at a BS with imperfect SIC is defined in \eqref{imperfectI}. It can be seen that the intra-cluster interference is composed of two parts, i.e., the interference observed from all users in set
$\mathcal{N}^{x_0}_{\mathrm{out}} $
that are located beyond the distance $\hat{r}=r_{(m)}$ and the interference from users in set $\mathcal{N}^{x_0}_{\mathrm{in}}$  whose signals go undetected. 
The Laplace transform of the intra-cluster interference with imperfect SIC can then be derived in closed-form as follows.
\begin{lemma}[Laplace transform of the Intra-Cluster Interference with ImPerfect SIC] The intra-cluster interference experienced by the transmission of $m$-th ranked user in the cluster, with imperfect SIC, can be given as follows:
\begin{equation*}
\mathcal{L}_{I^{m,\mathbf{b}}_{\mathrm{intra}}}(s) \stackrel{(a)}{=} 
\mathcal{L}_{I^{m}_{\mathrm{intra}}}(s) \mathcal{L}_{I^{m}_{\mathrm{add}}}(s),
\end{equation*}
where (a) follows from the  fact that, conditioned on the serving distance of $m$-th user, $\mathcal{N}^{x_0}_{\mathrm{in}}$ and $\mathcal{N}^{x_0}_{\mathrm{out}}$ are independent and the interfering distance distributions of their respective users are also independent. 
Note that $\mathcal{L}_{I^{m}_{\mathrm{intra}}}(s)$ is given in Lemma~4
and $\mathcal{L}_{I^{m}_{\mathrm{add}}}(s)$ can be derived as follows:
\begin{align*}&
\mathcal{L}_{I^{m}_{\mathrm{add}}}(s){=} 
\prod_{j=1}^{m-1} 
\left(
\sum_{i=0}^{m-j-1}
K(i)
\frac{
s^{\frac{2+2i}{\alpha}} \mathbf{B}[\frac{-\hat{r}^\alpha}{s},1+\frac{2+2i}{\alpha},0]}{\hat{r}^{ 2 (j +  i)}}
\right)^{1-b(j)},
\end{align*}
where  $K(i)=\frac{2 \Gamma[m] (-1)^{i - \frac{2+2i}{\alpha} -1}}{\Gamma[j] \Gamma[i+1] \Gamma[m-j-i] \alpha}$ and $\mathbf{B}(\cdot)$ is the incomplete Beta function.
\end{lemma}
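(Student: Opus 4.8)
The plan is to prove the two displayed identities separately. For part~(a) I would condition throughout on the serving distance $r_{(m)}=\hat r$ of the $m$-th user. By construction the decomposition $I^{m,\mathbf{b}}_{\mathrm{intra}}=I^{m}_{\mathrm{intra}}+I^{m}_{\mathrm{add}}$ splits the intra-cluster interference into a part $I^{m}_{\mathrm{intra}}$ that involves only the users of $\mathcal{N}^{x_0}_{\mathrm{out}}$ and a part $I^{m}_{\mathrm{add}}=\sum_{j=1}^{m-1}(1-b(j))S_{(j)}$ that involves only the users of $\mathcal{N}^{x_0}_{\mathrm{in}}$. By Lemma~2, conditioned on $\hat r$ the distances of the users in $\mathcal{N}^{x_0}_{\mathrm{in}}$ are independent of those in $\mathcal{N}^{x_0}_{\mathrm{out}}$; combined with the fact that all Rayleigh fading coefficients are i.i.d.\ and independent of the point process, this makes $I^{m}_{\mathrm{intra}}$ and $I^{m}_{\mathrm{add}}$ conditionally independent given $\hat r$, so $\mathbb{E}[e^{-s(I^{m}_{\mathrm{intra}}+I^{m}_{\mathrm{add}})}\mid\hat r]=\mathcal{L}_{I^{m}_{\mathrm{intra}}}(s)\,\mathcal{L}_{I^{m}_{\mathrm{add}}}(s)$, and $\mathcal{L}_{I^{m}_{\mathrm{intra}}}(s)$ is exactly the quantity already obtained in Lemma~4.

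It then remains to compute $\mathcal{L}_{I^{m}_{\mathrm{add}}}(s)$. First I would average over the fading: since $e^{-sI^{m}_{\mathrm{add}}}=\prod_{j=1}^{m-1}e^{-s(1-b(j))P_u h_{(j)}r_{(j)}^{-\alpha}}$ and the $h_{(j)}$ are i.i.d.\ unit-mean exponential and independent of all the distances, this gives $\mathcal{L}_{I^{m}_{\mathrm{add}}}(s)=\mathbb{E}\big[\prod_{j=1}^{m-1}\big(1+s P_u r_{(j)}^{-\alpha}\big)^{-(1-b(j))}\big]$, where the remaining expectation is over the conditional joint law of the ranked distances in $\mathcal{N}^{x_0}_{\mathrm{in}}$. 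The key move is to factor this expectation over the ranks $j$: appealing to the conditional i.i.d./independence structure of the intra-cluster distances established in Lemmas~2--3, the undetected rank-$j$ interferers can be treated one at a time, so it suffices to evaluate, for every undetected rank $j<m$, the single integral $\int_0^{\hat r}\frac{\gamma^{\alpha}}{\gamma^{\alpha}+sP_u}\,f_{r_{(j)}\mid r_{(m)}=\hat r}(\gamma)\,d\gamma$ using the marginal density from Lemma~3.

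For each such integral I would expand $(\hat r^{2}-\gamma^{2})^{m-j-1}$ in the Lemma~3 density by the binomial theorem, which produces exactly the finite sum over $i=0,\dots,m-j-1$ in the statement; the factor $\hat r^{2(m-j-1-i)}$ from this expansion combines with the normalizing $\hat r^{-(2m-2)}$ of the density to give the $\hat r^{-2(j+i)}$ in the claimed formula, while the binomial coefficient $\binom{m-j-1}{i}=\Gamma(m-j)/(\Gamma(i+1)\Gamma(m-j-i))$ cancels the $\Gamma(m-j)$ of the density and leaves the Gamma prefactor $2\Gamma(m)/(\Gamma(j)\Gamma(i+1)\Gamma(m-j-i))$. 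The remaining monomial integral $\int_0^{\hat r}\gamma^{c}\,\tfrac{\gamma^{\alpha}}{\gamma^{\alpha}+sP_u}\,d\gamma$ (with $c$ the net power of $\gamma$) is handled by the substitution $u=-\gamma^{\alpha}/s$ (the transmit power $P_u$ being absorbed into $s$, as in Lemma~4), under which $\gamma^{\alpha}+sP_u$ becomes a multiple of $1-u$ and the integral reduces to a generalized incomplete Beta function of the form $\mathbf{B}[-\hat r^{\alpha}/s,\,1+\tfrac{2+2i}{\alpha},\,0]$, with the $1/\alpha$ from the change of variables and the powers of $-1$ from $(-s)^{\,\cdot/\alpha}$ absorbed into $K(i)$. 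Collecting everything and restoring the product over $j=1,\dots,m-1$ with exponents $1-b(j)$ (so detected ranks contribute a trivial factor) gives the stated closed form. I expect the rank-wise factorization in the second paragraph to be the main obstacle: because $r_{(1)}\le\cdots\le r_{(m-1)}$ are order statistics and hence dependent, this step is exact when at most one closer user goes undetected and otherwise serves as a tractable approximation, in the same spirit as the distance-ordering approximation used throughout the paper; the incomplete-Beta bookkeeping in the last paragraph is then routine.
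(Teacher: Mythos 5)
Your proposal follows essentially the same route as the paper's Appendix~D: conditional independence of $\mathcal{N}^{x_0}_{\mathrm{in}}$ and $\mathcal{N}^{x_0}_{\mathrm{out}}$ given $\hat r$ for the factorization in (a), averaging over the unit-mean exponential fading, using the Lemma~3 conditional density for each undetected rank $j$, and a binomial expansion of $(\hat r^2-\gamma^2)^{m-j-1}$ followed by the change of variables that yields the incomplete Beta functions. The ``main obstacle'' you identify --- the dependence among the ordered distances in $\mathcal{N}^{x_0}_{\mathrm{in}}$, which makes the rank-wise factorization exact only when at most one closer user goes undetected --- is treated in the paper in exactly the way you propose, via an explicitly stated assumption that this weak correlation is ignored so that the product and the integration can be interchanged.
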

\begin{proof}
See \textbf{Appendix~D}.
\end{proof}

\subsection{Laplace transforms of the Inter-Cluster Interference}
Typically, the  Laplace transform of the inter-cluster interference  in PCP is characterized  for  a fixed-distance typical link (where the receiver is not chosen from the PCP)~\cite{martin}. Recently,  a more general approach is presented in \cite{afshang1,afshang2} to characterize the inter-cluster interference
in which both the transmitter and receiver can be chosen from the PCP. Also, it was shown in \cite{afshang1,afshang2} that both the intra/inter-cluster interfering distances can be modeled using Rice distributions for Modified Thomas Cluster Processes.

Given the definition of the inter-cluster interference in \eqref{inter}, $I_{\mathrm{inter}}=\sum_{x \in \Phi_m \backslash x_0} \sum_{y \in \mathcal{N}^{x}}  P_u h_{y_{x}} ||x+ y||^{-\alpha}$ and following a similar approach proposed in \cite{afshang1}, we can write the Laplace transform $\mathcal{L}_{\mathrm{inter}}(s)$ as follows:
\begin{align*}
&\mathcal{L}_{\mathrm{inter}}(s)\stackrel{(a)}{=}
\mathbb{E} \left[e^{-s \sum_{x \in \Phi_m \backslash x_0} \sum_{y \in \mathcal{N}^{x}}  P_u h_{y_{x}} ||x+ y||^{-\alpha} }\right],
\nonumber\\&\stackrel{(b)}{=}
\mathbb{E} \left[  \prod_{x \in \Phi_m \backslash x_0}   
\prod_{y \in \mathcal{N}^{x}} 
\mathbb{E}_{h}[ e^{-P_u h_{y_{x}} ||x+ y||^{-\alpha}}]
\right],
\nonumber\\&\stackrel{(c)}{=}
\mathbb{E} \left[  \prod_{x \in \Phi_m \backslash x_0}  
\left(
\mathbb{E}_y \left[\frac{1}{1+s P_u ||x+ y||^{-\alpha}}\right]
\right)^{\bar{c}}
\right],
\nonumber\\&\stackrel{(d)}{=}
\Scale[1]{
\mathrm{exp}\left[-\lambda_m  
\int_{\mathbb{R}^2}  
\left(1-
\left(\int_{\mathbb{R}^2}
\frac{1}{1+s P_u ||x+ y||^{-\alpha}} f_Y(y) dy\right)^{\bar{c}}
\right)
dx
\right]},
\nonumber\\&\stackrel{(e)}{=}
\Scale[1]{
\mathrm{exp}\left[-2 \pi \lambda_m  
\int_{0}^\infty   
\left(1-\left(
\int_{u} 
\frac{1}{1+s P_u u^{-\alpha}} f_U(u|v) du\right)^{\bar{c}}
\right)
 v dv
\right]},
\end{align*}
where (a) follows from the definition of the Laplace transform and (b) follows from the property of the exponential function. Conditioned on the distance  $v=||x||$ from  the representative BS to the cluster center $x \in \Phi_m$,  the distance of each user device within the cluster (whose center is located at $x$) to the representative BS is i.i.d. Subsequently,   (c) follows from the fact that the distances of all users in an interfering cluster to the reference BS are i.i.d., (d) follows from the PGFL of PPP since all cluster centers $x$ follow a homogeneous PPP, and (e) follows from the conversion of Cartesian to polar coordinates. The Laplace transform can thus be derived as follows.
\begin{lemma}[Laplace Transform of the Inter-Cluster Interference]
Given the distance $v$ as illustrated in Fig.~2, the conditional distribution of $u$ can be given for the cases (i) $v \geq R$ and (ii) $v < R$, respectively, as follows \cite{geom1,geom2}:
\begin{equation}\label{d1}
f_{U|{v\geq R}}(u)=\frac{u^2}{R^2}-\frac{2 u^2}{R^2}\mathrm{sin}^{-1}\left(\frac{v^2-R^2+u^2}{2 u v}\right),
\end{equation}
where $v-R\leq u\leq R+v$. For the case $v<R$, we have:
\begin{equation}\label{d2}
f_{U|{v < R}}(u)=
\begin{cases}
\frac{2 u}{ R^2}, & 0\leq u\leq R-v,\\
\frac{u^2}{R^2}-\frac{2 u^2}{R^2}\mathrm{sin}^{-1}\left(\frac{v^2-R^2+u^2}{2 u v}\right),& R-v\leq u\leq R+v.
\end{cases}
\end{equation}
Subsequently, $\mathcal{L}_{\mathrm{inter}}(s)$ can be defined using \eqref{d1} and \eqref{d2} as:
\begin{align}\label{d1d2}
&\mathcal{L}_{\mathrm{inter}}(s)=
\mathrm{exp}\left(-2 \pi \lambda_m  
\left(\mathcal{A}_1 +\mathcal{A}_2\right)
\right),
\end{align}
where
\begin{equation*}
{\mathcal{A}_1=
\int_{0}^{R}   
\left(1-\left(\underbrace{
\int_u 
\frac{1}{1+s P_u u^{-\alpha}} f_{U|v < R}(u) du}_{\mathrm{I}_1}\right)^{\bar{c}}
\right)
 v dv},
\end{equation*}
\begin{equation*}
\mathcal{A}_2=
\int_{R}^\infty   
\left(
1-\left(
\underbrace{
\int_u 
\frac{1}{1+s P_u u^{-\alpha}} f_{U|v\geq R}(u) du}_{\mathrm{I}_2}\right)^{\bar{c}}
\right)
 v dv.
\end{equation*}
\end{lemma}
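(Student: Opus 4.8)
The plan is to start from expression (e) just established above the lemma, $\mathcal{L}_{\mathrm{inter}}(s)=\exp\!\big[-2\pi\lambda_m\int_0^\infty\!\big(1-(\int_u \tfrac{1}{1+sP_u u^{-\alpha}}f_U(u|v)\,du)^{\bar c}\big)v\,dv\big]$, so that everything reduces to identifying the conditional density $f_U(u|v)$ of the distance $u=\|x+y\|$ from the representative BS (the origin) to a generic member $y$ of an interfering cluster whose center $x$ sits at distance $v=\|x\|$. Since $y$ is uniform on the disk $B(x,R)$, this is the classical geometric-probability problem of the distance from a \emph{fixed} point to a \emph{uniform} point of a disk, and I would obtain it from the CDF $F_U(u|v)=\tfrac{1}{\pi R^2}\,\mathrm{Area}\big(B(0,u)\cap B(x,R)\big)$ — the fraction of the interfering disk swept out by the circle of radius $u$ about the origin — by differentiation; equivalently $f_U(u|v)$ is $\tfrac{1}{\pi R^2}$ times the arc length of that circle lying inside $B(x,R)$, whose half-angle follows from the law of cosines on the triangle with sides $u,v,R$.

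This produces the two regimes recorded in \eqref{d1}–\eqref{d2}. When the origin lies outside the interfering disk ($v\ge R$), only $u\in[v-R,v+R]$ contributes and the circle of radius $u$ always crosses the boundary of $B(x,R)$, giving the single arcsine branch \eqref{d1}. When the origin lies inside ($v<R$), for $u\le R-v$ the whole circle is contained in $B(x,R)$, so $f_U(u|v)=2u/R^2$ exactly as for an ordinary uniform disk, while for $R-v\le u\le R+v$ the boundary-crossing case reappears, yielding the piecewise form \eqref{d2}; both formulas are the standard ones of \cite{geom1,geom2}. Substituting \eqref{d1}–\eqref{d2} into (e) and splitting the outer $v$-integral at $v=R$ then yields \eqref{d1d2}: the piece $v\in[0,R]$ uses $f_{U|v<R}$ and is $\mathcal{A}_1$ with inner integral $\mathrm{I}_1$, and the piece $v\in[R,\infty)$ uses $f_{U|v\ge R}$ and is $\mathcal{A}_2$ with inner integral $\mathrm{I}_2$. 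I would also note that $\mathcal{A}_2$ is improper but convergent: for large $v$ the support of $u$ is pushed out, $\tfrac{1}{1+sP_u u^{-\alpha}}\to1$, so $1-(\mathrm{I}_2)^{\bar c}$ decays like $v^{-\alpha}$ and is integrable against $v\,dv$ since $\alpha>2$.

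The genuinely delicate point is the geometry when the representative BS lies inside an interfering cluster's disk, i.e.\ the case split at $u=R-v$: one must retain the full-circle contribution $2u/R^2$ there rather than blindly applying the lens formula. Everything else — the law-of-cosines angle, the area/arc-length bookkeeping, the i.i.d.\ structure of within-cluster distances conditioned on $v$, the PPP probability generating functional over cluster centers, and the Cartesian-to-polar change of variables (steps (a)–(e) above) — is routine and is taken as given.
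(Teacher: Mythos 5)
Your proposal follows essentially the same route as the paper: the paper likewise starts from expression (e), inserts the conditional distance distribution $f_U(u|v)$ of the distance from the representative BS to a uniformly placed member of an interfering cluster (quoted from the geometric-probability references \cite{geom1,geom2} rather than re-derived), and splits the outer $v$-integral at $v=R$ to obtain $\mathcal{A}_1$ (with $\mathrm{I}_1$) and $\mathcal{A}_2$ (with $\mathrm{I}_2$); your arc-length/lens-area derivation of the densities and your convergence remark for $\mathcal{A}_2$ are additions the paper omits, and they are sound, while the closed-form evaluation of $\mathrm{I}_1,\mathrm{I}_2$ in Appendix~E is not part of the lemma's claim. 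One minor caveat: your derivation gives the standard density $\frac{2u}{\pi R^2}\cos^{-1}\!\left(\frac{u^2+v^2-R^2}{2uv}\right)=\frac{u}{R^2}-\frac{2u}{\pi R^2}\sin^{-1}\!\left(\frac{u^2+v^2-R^2}{2uv}\right)$, so the $u^2/R^2$ factors and the missing $1/\pi$ in the paper's \eqref{d1}--\eqref{d2} are apparently typographical and should not be taken as what your argument "produces" literally.
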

The integrals $\mathrm{I}_1$ and $\mathrm{I}_2$ are solvable in closed-form as detailed in {\bf Appendix~E} and the Laplace transform as well as the rate coverage can be evaluated numerically using \texttt{MAPLE} and \texttt{MATHEMATICA}. 

Note that the Laplace transform in (d) can also be attained from the  PGFL of an MCP with fixed number of daughters per cluster  as defined below.
\begin{definition}[PGFL of the Matern Cluster Process]
The PGFL of the MCP given the number of nodes are fixed $\bar c$ per cluster
can be given as follows:
\begin{equation}
\mathbb{E}\left[\prod_{x \in \Phi_m}v(x) \right]=\mathrm{exp}\left[-\lambda \int_{\mathbb{R}^2} (1-{Z}(v(x))^{\bar c}) dx\right],
\end{equation}
where $\lambda$ denotes the intensity of the parent point process which is a homogeneous PPP in case of MCP and $Z(v(x))=\int_{\mathbb{R}^2} v(x+y) f(y) dy=\mathbb{E}_y[v(x+y)]$.
Since the original cluster process is stationary, the inter-cluster interference is independent of the position of the receiver~\cite{martin}.
\end{definition}
Now observing that $Z(v(x))^{\bar{c}}=\mathbb{E}_y[v(x+y)]^{\bar{c}}=\mathbb{E}_y[\frac{1}{1+s P_u ||x+ y||^{-\alpha}}]^{\bar{c}}$, we can apply Jensen inequality $\mathbb{E}_y [a]^{\bar{c}}\leq\mathbb{E}_y[a^{\bar{c}}]$ to derive an upper bound on the Laplace transform of the inter-cluster interference. That is 
\begin{align}\label{bound}
&\mathcal{L}_{\mathrm{inter}}(s)\stackrel{(a)}{\leq}
\int_{\mathbb{R}^2}\left(1-
\mathbb{E}_y\left[\frac{1}{(1+s P_u ||x+ y||^{-\alpha})^{\bar{c}}}\right]\right)dx,
\\
&=
\int_{\mathbb{R}^2} \int_{\mathbb{R}^2} 
\left(1-\frac{1}{(1+s P_u ||x+ y||^{-\alpha})^{\bar{c}}}\right) f_Y(y) dy dx,
\\
&\stackrel{(b)}{=}
\mathrm{exp}\left({-\pi \lambda_m (s P_u)^{\frac{2}{\alpha}} \bar{c} B[1 - \frac{2}{\alpha}, 
  \bar{c} + \frac{2}{\alpha}]}\right),
\end{align}
where (a) follows from applying the Jensen inequality and (b) follows from applying the cartesian-to-polar coordinate transformation and integration by parts~\cite{hasna}. 
\subsection{Rate Coverage Probability for NOMA} 
In this section, we derive the rate coverage probability  of a user at rank $m$ for all three cases, i.e., perfect SIC, imperfect SIC, and worst-case SIC. 

\subsubsection{Rate Coverage with Perfect SIC} 
Using the distance distribution of the user at rank $m$ derived in {\bf Lemma~1}
and the Laplace transforms of the inter- and intra-cluster interferences derived, respectively, in {\bf Lemma~4} and {\bf Lemma~6}, the $\mathrm{SINR}$ coverage of a user at rank $m$ can be derived as follows: 
\begin{align}\label{rate11}
&\mathcal{C}^{(\mathrm{P})}_m
=\mathbb{P}\left(\frac{P_u h_{y_{x_0}} \hat{r}^{-\alpha}}{I^{m}_{\mathrm{agg}}+N_0} \geq \gamma_m\right)
=\mathrm{exp}\left(-\frac{\gamma_m(I^{m}_{\mathrm{agg}}+N_0)}{P_u \hat{r}^{-\alpha}} \right),
\nonumber\\
&\stackrel{(a)}{=}
\int_0^{R} 
e^{-\frac{\gamma_m N_0}{P_u \hat{r}^{-\alpha}}}
\mathcal{L}_{I^{m}_{\mathrm{intra}}}
\left(
\frac{\gamma_m}{P_u \hat{r}^{-\alpha}}\right)
\mathcal{L}_{I_{\mathrm{inter}}}
\left(
\frac{\gamma_m}{P_u \hat{r}^{-\alpha}}\right)
f_{\hat r} (\hat r) d\hat{r},
\end{align}
where (a) follows from averaging over the distribution of $\hat{r}$ and the definition $I^{m}_{\mathrm{agg}}=I_{\mathrm{inter}}+I^m_{\mathrm{intra}}$. Since NOMA systems are typically interference limited, the expression in (a) can be further simplified by substituting $N_0=0$. Also, taking $\alpha=4$ and taking the upper bound of the Laplace transform of the inter-cluster interference in \eqref{bound}, we can simplify \eqref{rate11} as follows:
\begin{align*}
  \sum_{i=0}^{c-m}  
\int_0^1 
\frac{G(i) 
\left(1 - \sqrt{\gamma_m} z\mathrm{cot}^{-1}(\sqrt{\gamma_m} z)  \right)^i}
{e^{z R^2 \sqrt{\gamma_m}\bar{c} \lambda_m B[1/2, 1/2 + \bar{c}]}z^{i+1-\bar{c}}}
dz,
\end{align*}
where
\begin{equation*}
G(i)=\frac{  (\sqrt{\gamma_m} \mathrm{cot}^{-1}\sqrt{\gamma_m} - 1)^{\bar{c}-m-i}
  {\bar{c}-m\choose i}}{B[m, 1 + \bar c - m]}.
\end{equation*}

\subsubsection{Rate Coverage with Imperfect SIC} 
In this case, a given user at rank $m$ is prone to the interferences from all users 
in the set $\mathcal{N}^{x_0}_{\mathrm{out}}$ as well as from some users in the set $\mathcal{N}^{x_0}_{\mathrm{in}}$ whose signals go undetected. As such, using the distance distribution of the user at rank $m$ derived in {\bf Lemma~1}, the Laplace transforms of the intra- and inter-cluster interferences derived, respectively, in {\bf Lemma~5} and {\bf Lemma~6}, and the interference limited case, the $\mathrm{SIR}$ coverage of a user at rank $m$ in the representative cluster can be derived as follows:  
\begin{equation}\label{best1}
\mathcal{C}^{(\mathrm{I})}_m
=
\int_0^{R} 
\mathcal{L}_{I^{m, \mathbf{b}}_{\mathrm{intra}}}
\left(
\frac{\eta_m}{P_u \hat{r}^{-\alpha}}\right)
\mathcal{L}_{I_{\mathrm{inter}}}
\left(
\frac{\eta_m}{P \hat{r}^{-\alpha}}\right)
f_{\hat r} (\hat r) d\hat{r}.
\end{equation}

\subsubsection{Rate Coverage with Imperfect SIC-Worst Case} 
The rate coverage probability with the worst case mapping can be derived as $\mathcal{C}^{\mathrm{worst}}_{m}= {p}^{\mathrm{worst}}_{(m)}\mathcal{C}^{(\mathrm{P})}_m$, where $\mathcal{C}^{(\mathrm{P})}_m$ is given using \eqref{rate11}. Note that the derivation of the detection probability ${p}^{\mathrm{worst}}_{(m)}$ defined in \eqref{pworst} can be done
by replacing $\gamma_m$ with $\theta$ in \eqref{rate11} for each user at $j$-th rank.

\subsection{Rate Coverage Probability for OMA}
The rate coverage performance of a user at rank $m$ in a given OMA cluster can be derived as follows.
\begin{corollary}[Rate Coverage Probability of a User at Rank $m$ in OMA]
The rate coverage probability of a user at rank $m$ in TDMA-based OMA system can  be given by taking zero intra-cluster interference, using the Laplace transform of the inter-cluster interference from \eqref{d1d2} with $\bar{c}$=1, and  replacing $\gamma_m$ with $\gamma^{(\mathrm{oma})}_m$ as follows:
\begin{align}\label{rate1}
&\mathcal{C}^{(\mathrm{oma})}_m
\stackrel{(a)}{=}
\int_0^{R} 
e^{-\frac{\gamma^{(\mathrm{oma})}_m N_0}{P_u \hat{r}^{-\alpha}}}
\mathcal{L}^{(\mathrm{oma})}_{I_{\mathrm{inter}}}
\left(
\frac{\gamma^{(\mathrm{oma})}_m}{P_u \hat{r}^{-\alpha}}\right)
f_{\hat r} (\hat r) d\hat{r},
\\&
\stackrel{(b)}{=}
\gamma[\bar{c} + 1] \:_1\tilde{F}_1[m, 1 +\bar c,- K R^2],
\end{align}
where $K=(\gamma^{(\mathrm{oma})}_m)^{2/\alpha}\bar{c} \lambda_m B[1-2/\alpha, 2/\alpha + \bar{c}]$ and $:_1\tilde{F}_1[\cdot]$ is the regularized confluent Hypergeometric function. 
\end{corollary}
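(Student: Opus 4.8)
The plan is to obtain \eqref{rate1}(a) by specializing the perfect-SIC NOMA coverage expression \eqref{rate11} to the equivalent TDMA--OMA system, and then to evaluate the resulting one-dimensional integral in closed form using the Generalized Beta density of \textbf{Lemma~1} together with the integral representation of the confluent hypergeometric function.

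For part~(a), I would start from $\mathcal{C}^{(\mathrm{P})}_m$ as written in \eqref{rate11} and impose the three defining features of TDMA--OMA: (i)~a user shares its cell with no co-channel user, so the intra-cluster Laplace transform $\mathcal{L}_{I^{m}_{\mathrm{intra}}}(\cdot)$ is replaced by $1$; (ii)~only one user per neighbouring cell transmits in a given slot, so $\bar c=1$ is substituted in the inter-cluster Laplace transform $\mathcal{L}_{I_{\mathrm{inter}}}(\cdot)$ of \eqref{d1d2}; and (iii)~the SINR threshold $\gamma_m$ is replaced by $\gamma^{(\mathrm{oma})}_m=2^{R_{\mathrm{th}}\bar c}-1$, which encodes the $1/\bar c$ pre-log scaling of TDMA. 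Averaging over the serving distance $\hat r$ against $f_{\hat r}(\hat r)$ then yields \eqref{rate1}(a).

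For part~(b), I would first observe that at $\bar c=1$ the inter-cluster Laplace transform has a clean exponential form: either Jensen's inequality in \eqref{bound} holds with equality because the outer power reduces to $\bar c=1$, or, equivalently, the offspring process reduces by the displacement theorem to a homogeneous PPP of intensity $\lambda_m$, so that \eqref{d1d2} collapses to the classical PPP interference transform. In either case, evaluated at $s=\gamma^{(\mathrm{oma})}_m \hat r^{\alpha}/P_u$ this gives $\mathcal{L}^{(\mathrm{oma})}_{I_{\mathrm{inter}}}=\exp(-K\hat r^{2})$ with $K$ as in the statement. Taking the interference-limited regime $N_0=0$ and inserting the GB density $f_{\hat r}(\hat r)=\dfrac{2\hat r^{2m-1}(1-\hat r^{2}/R^{2})^{\bar c-m}}{R^{2m}B(m,\bar c-m+1)}$ from \textbf{Lemma~1}, the coverage equals $\dfrac{2}{R^{2m}B(m,\bar c-m+1)}\displaystyle\int_0^{R}e^{-K\hat r^{2}}\,\hat r^{2m-1}\big(1-\hat r^{2}/R^{2}\big)^{\bar c-m}\,d\hat r$, and the substitution $t=\hat r^{2}/R^{2}$ turns this into $\dfrac{1}{B(m,\bar c-m+1)}\displaystyle\int_0^{1}e^{-KR^{2}t}\,t^{m-1}(1-t)^{\bar c-m}\,dt$.

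The final step is to recognize this as Kummer's integral $\int_0^1 e^{zt}\,t^{a-1}(1-t)^{b-a-1}\,dt=B(a,b-a)\,{}_1F_1(a;b;z)$ with $a=m$, $b=\bar c+1$, $z=-KR^{2}$; the Beta factor cancels the normalization inherited from the GB density, leaving ${}_1F_1(m;\bar c+1;-KR^{2})=\Gamma(\bar c+1)\,{}_1\tilde{F}_1[m,1+\bar c,-KR^{2}]$ by the definition of the regularized confluent hypergeometric function, which is \eqref{rate1}(b). I expect the only genuinely non-mechanical point to be the reduction of $\mathcal{L}_{I_{\mathrm{inter}}}$ at $\bar c=1$ to the clean exponential $\exp(-K\hat r^{2})$ --- i.e.\ checking that the geometric-probability integrals $\mathcal{A}_1+\mathcal{A}_2$ of \eqref{d1d2} indeed collapse to the displacement-theorem value; the remaining steps are one change of variables and a standard special-function identity.
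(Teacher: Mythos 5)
Your overall route is the same as the paper's: part (a) is obtained, as the paper prescribes, by specializing the perfect-SIC expression \eqref{rate11} (intra-cluster Laplace transform set to one, OMA threshold $\gamma^{(\mathrm{oma})}_m$, averaging over the Lemma~1 serving-distance density), and part (b) by neglecting noise, using an exponential form for the inter-cluster Laplace transform, substituting $t=\hat r^2/R^2$, and invoking Kummer's integral $\int_0^1 e^{zt}t^{a-1}(1-t)^{b-a-1}dt=B(a,b-a)\,{}_1F_1(a;b;z)$ with $a=m$, $b=\bar c+1$ — this is exactly the closed-form computation behind the paper's result, and your identification ${}_1F_1(m;\bar c+1;-KR^2)=\Gamma(\bar c+1)\,{}_1\tilde{F}_1[m,1+\bar c,-KR^2]$ matches the printed (b). Your added observation that for a single interferer per cluster the Jensen step in \eqref{bound} is tight (equivalently, an MCP with one uniformly displaced daughter per parent is again a PPP of intensity $\lambda_m$) is a nice strengthening that the paper does not make; the paper simply states that it takes the bound \eqref{bound}.

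One step, however, does not hold as you assert it. If you genuinely substitute $\bar c=1$ into the inter-cluster Laplace transform (your step (ii) and your PPP-reduction argument), then at $s=\gamma^{(\mathrm{oma})}_m\hat r^{\alpha}/P_u$ the exponent is $\pi\lambda_m(\gamma^{(\mathrm{oma})}_m)^{2/\alpha}B[1-\frac{2}{\alpha},1+\frac{2}{\alpha}]\,\hat r^{2}$, \emph{not} $K\hat r^{2}$ with the stated $K=(\gamma^{(\mathrm{oma})}_m)^{2/\alpha}\bar c\,\lambda_m B[1-\frac{2}{\alpha},\frac{2}{\alpha}+\bar c]$, which retains $\bar c$ both as a prefactor and inside the Beta function (and carries no $\pi$; the paper drops the $\pi$ of \eqref{bound} in its plugged-in expressions elsewhere too, e.g., in the $\alpha=4$ simplification following \eqref{rate11}). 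The printed $K$ corresponds to inserting the general-$\bar c$ bound \eqref{bound} as is — which is what the paper's note says it does — and is in tension with the corollary's own instruction to set $\bar c=1$ for the interfering clusters. So your claim that the reduction "gives $\exp(-K\hat r^2)$ with $K$ as in the statement" needs to be either corrected (your $\bar c=1$ reduction yields the constant with $\bar c$ replaced by $1$ and a factor $\pi$) or replaced by the paper's literal step of plugging \eqref{bound} with general $\bar c$. This discrepancy only changes the constant multiplying $\hat r^{2}$; the change of variables and the confluent hypergeometric identity that complete your proof are unaffected.
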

Note that (b) is derived by neglecting the noise, taking the bound of the inter-cluster interference  in \eqref{bound}, and solving the integral in (a) in  closed-form.

\section{Numerical Results}
In this section, we  investigate the performance of NOMA and OMA in clustered cellular networks. We conduct a comparative analysis with the conventional PPP-based cellular network model  to demonstrate that how two different modeling approaches may impact the accuracy of the conclusions related to the performance of NOMA versus OMA. The accuracy of the derived rate coverage expressions is validated by comparing them in the results with the Monte-Carlo simulations. The performance of the  uplink NOMA versus uplink OMA system is investigated as a function of the maximum coverage radius  of clusters, users per cluster, intensity of the BSs, and the target rate requirements per user.

The locations of users are drawn from a Poisson cluster process in a  square region with area $|A|=10 \times 10~\mathrm{km}^2$. The cluster centers are spatially distributed as a PPP with intensity $\lambda_m$ and the users are  scattered uniformly around them. We set the  threshold for  successful demodulation and decoding  as 0 dB.  The  number 
of  users per cluster is taken as  $\bar{c}=8$ and the radius of each cluster is set to $r=0.8$~km.
We set the path-loss exponent to
$\beta = 4$ and the thermal noise power density to $\sigma^{2}=1\times 10^{-14}$~W/Hz. The transmit powers of  users are taken as $P_u = 2$~W. The target rate requirement of each  user is taken as $R_{\mathrm{th}}=3$ bps/Hz. The values of the aforementioned  parameters  remain the same unless stated otherwise.

\subsection{Impact of the Coverage Radius of a BS}

\begin{figure}
\begin{minipage}[t]{0.5\textwidth}
\includegraphics[width = 3.55in]{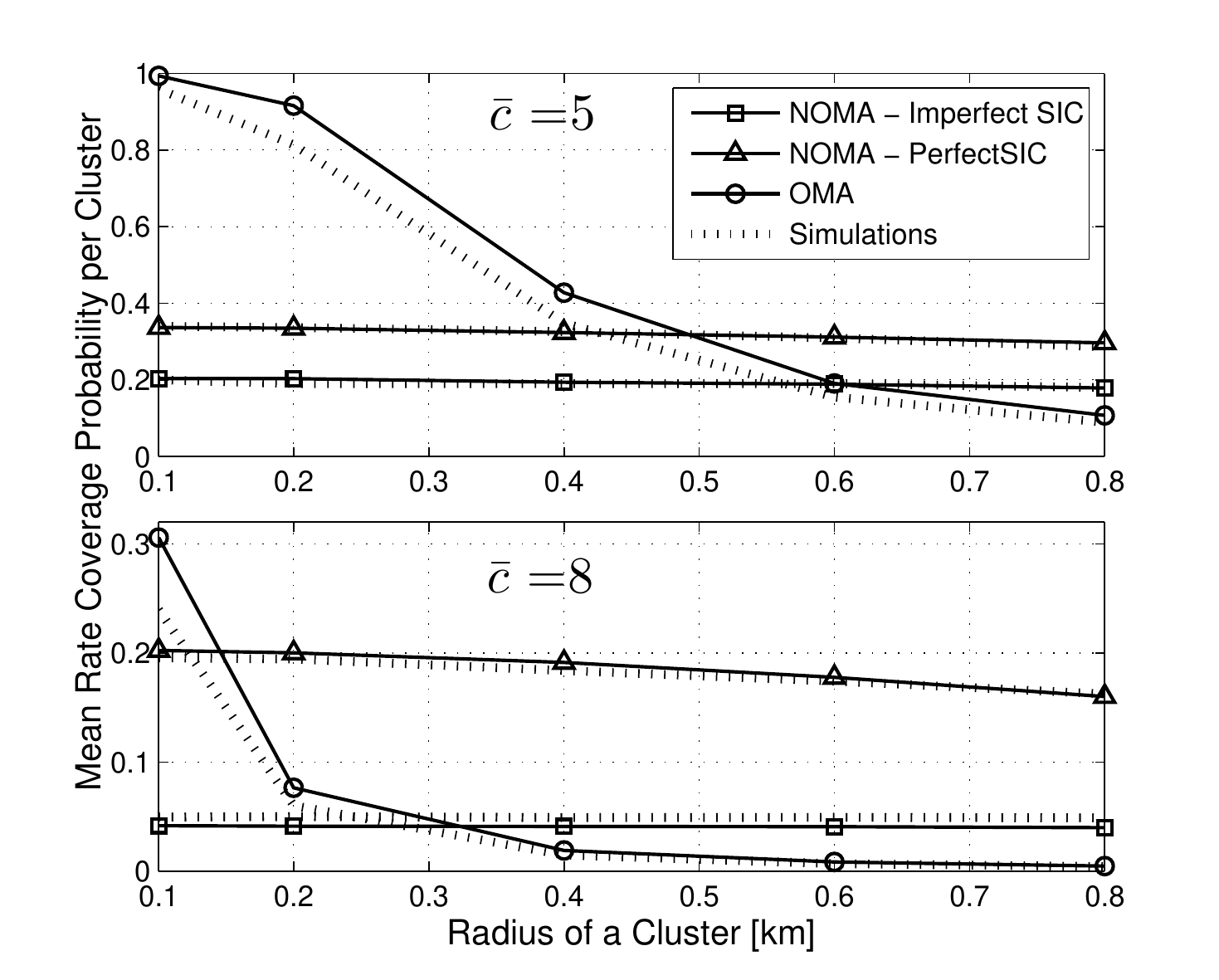}
\caption[c] { Mean rate coverage probability as a function of  the radius of each cluster in the single-tier cellular network, $\lambda_m|A|=2$.}
\label{nomaR}
\end{minipage}
\hfill
\begin{minipage}[t]{0.5\textwidth}
\includegraphics[width = 3.55in]{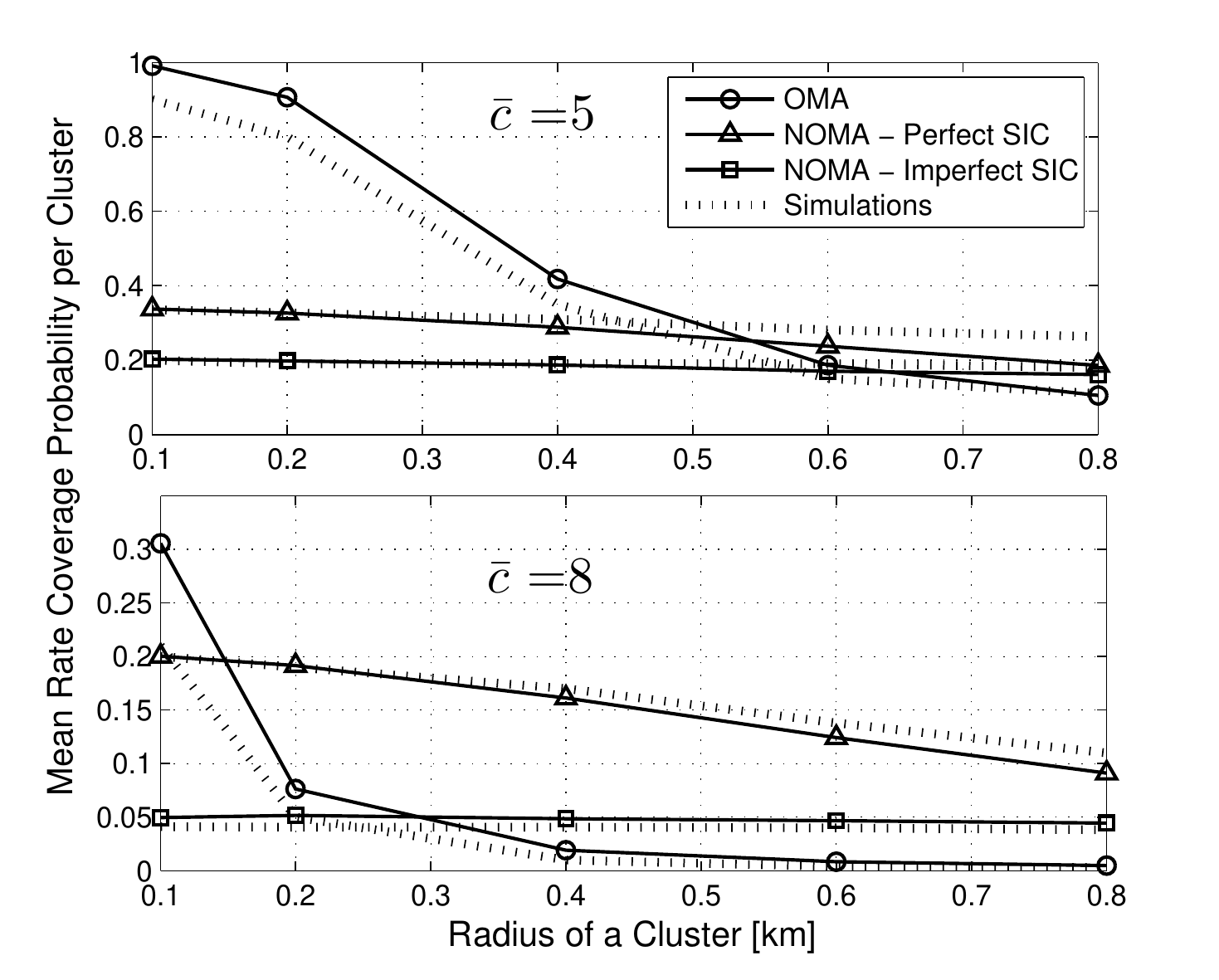}
\caption[c] { Mean rate coverage probability as a function of  the radius of each cluster in the single-tier cellular network, $\lambda_m|A|=8$.}
\label{nomaR1}
\end{minipage}
\end{figure}

\figref{nomaR} depicts the average rate coverage of $\bar c$ users in a NOMA representative cluster as a function of the maximum cell/cluster radius $R$ with $\lambda_m|A|=2$.  This scenario can be considered as intra-cell interference-limited due to low intensity of BSs. The results are compared with the mean rate coverage of $\bar{c}$ users in an equivalent OMA system where one user transmits at a time in each cell/cluster. 

First, it can be seen that the average rate coverage of the OMA cluster is higher than the NOMA cluster for small values of $R$. The reason is  no intra-cell interference in the OMA system and higher intra-cell interference in the NOMA system due to the vicinity of users and the representative BS. Nonetheless, as $R$ increases, the performance of OMA reduces significantly due to notable path-loss degradation of  weak users. However, interestingly, the performance of  NOMA system remains nearly the same since the effect of path-loss degradation gets nearly cancelled by the reduction of intra-cell interference.   As a result, beyond a certain coverage radius of a BS, the gains of NOMA become evident.

Further, we note that the performances of OMA and NOMA depend significantly on the number of users per cluster $\bar{c}$. As $\bar{c}$ increases, the mean coverage  probability reduces for both the OMA and NOMA systems. In OMA, the reduction is caused due to further splitting of resources, whereas in NOMA the reduction is caused due to the increased intra-cell interference. We also note that the rate coverage decay with $\bar c$ is more severe for OMA since the rate is a direct function of the amount of consumed resources. Due to this reason, NOMA starts to outperform OMA for relatively low values of $R$. {\em As such,  given a BS coverage radius $R$, it is important to select the correct  number of users in a   cluster to ensure channel distinctness.} 
Finally, it can be observed that the perfect SIC can improve the performance of NOMA significantly. Therefore, it is crucial to design efficient SIC strategies.

\subsection{Impact of the Intensity of BSs}

\figref{nomaR1} depicts the average rate coverage of $\bar c$ users in a NOMA representative cluster as a function of the maximum cell/cluster radius $R$ with $\lambda_m|A|=8$.   The results are compared with the mean rate coverage of $\bar{c}$ users in an equivalent OMA system where one user transmits at a time in each cell/cluster. 
The general conclusions and trends remain the same as in \figref{nomaR}. 
However, it can be seen that increasing the intensity of BSs reduces the coverage probability significantly. Further, in  low-inter-cell interference  scenarios (see \figref{nomaR}), we have observed that the performance of NOMA remains intact for increasing  values of $R$. On the contrary, at high intensity of BSs, \figref{nomaR1} shows that the NOMA performance degrades with increasing $R$. The reason is that the increasing  values of $R$ produce higher inter-cell interference as the users of neighboring cells are likely to be closer to the representative BS.
Note that the degradation of OMA is caused by both the path-loss degradation as well as inter-cell  interference, whereas in NOMA the degradation is mainly due to the increase of inter-cell interference. Finally, it can be seen that the rate coverage with imperfect SIC is least affected by the increasing intensity of BSs as the performance is limited by mainly the intra-cell interferences.

\begin{figure}[h]
\begin{center}
\includegraphics[width = 3.55in]{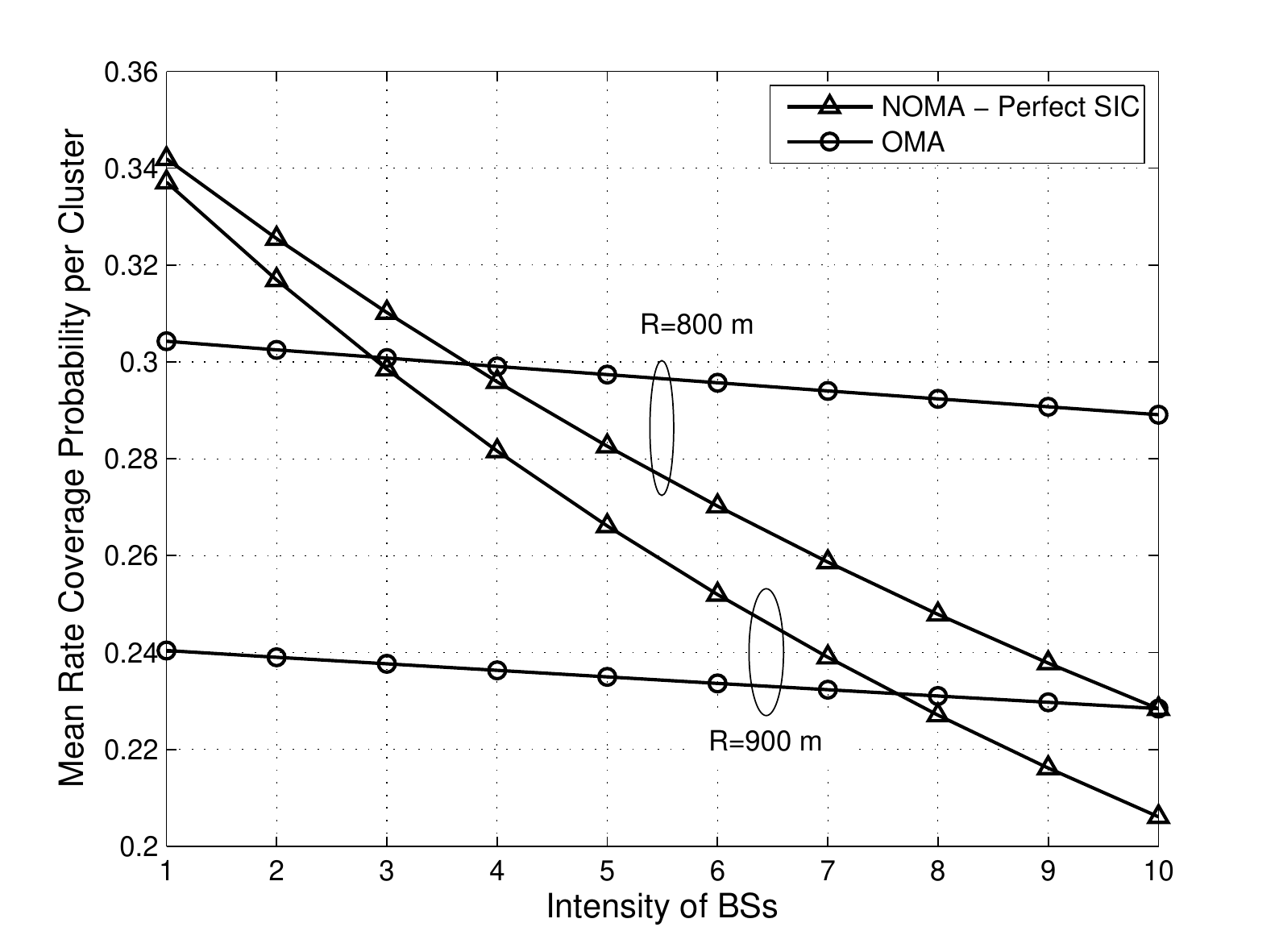}
\caption[c] { Mean rate coverage probability as a function of  $\lambda_m|A|$ in single-tier cellular network, $R_{\mathrm{th}}=1.5$~bps/Hz, $\bar{c}=8$.}
\label{Nm}
\end{center}
\end{figure}

\figref{Nm} demonstrates that increasing the intensity of BSs (or NOMA clusters) sharply degrades the performance of NOMA when compared to OMA. The reason is that  several users transmit in each NOMA cluster at the same time. On the other hand, there is only one user transmitting per BS in OMA; thus the performance of OMA is relatively less prone to the increasing intensity of BSs. 
On the other hand, the performance of OMA decays significantly with the increase in $R$ due to path-loss degradation, whereas in NOMA, the degradation is not significant as is also evident from \figref{nomaR} and \figref{nomaR1}. As such, {\em NOMA can potentially outperform OMA for a higher intensity of BSs if the coverage radius of BSs can be increased to ensure channel distinctness. }

\subsection{Analytical validation}
\figref{nomaR} and \figref{nomaR1} depict the mean rate coverage of $\bar c$ users in a cluster as a function of the maximum cell/cluster radius $R$ with $\lambda_m|A|=2$ and $\lambda_m|A|=8$, respectively. 
The impact of the distance approximation and the bound on the inter-cluster interference can be observed. The mismatch between the analysis and Monte-Carlo simulations is observed to increase with $\lambda_m$ which shows that the mismatch is greatly contributed by the bound on the Laplace transform of the inter-cluster interference. Also, the impact of the inter-cluster interference bound is more visible for OMA as the rate coverage expression of OMA depends mainly on the inter-cluster interference.

\subsection{Impact of the Desired User Rate Requirements}

\begin{figure}[h]
\begin{center}
\includegraphics[width = 3.55in]{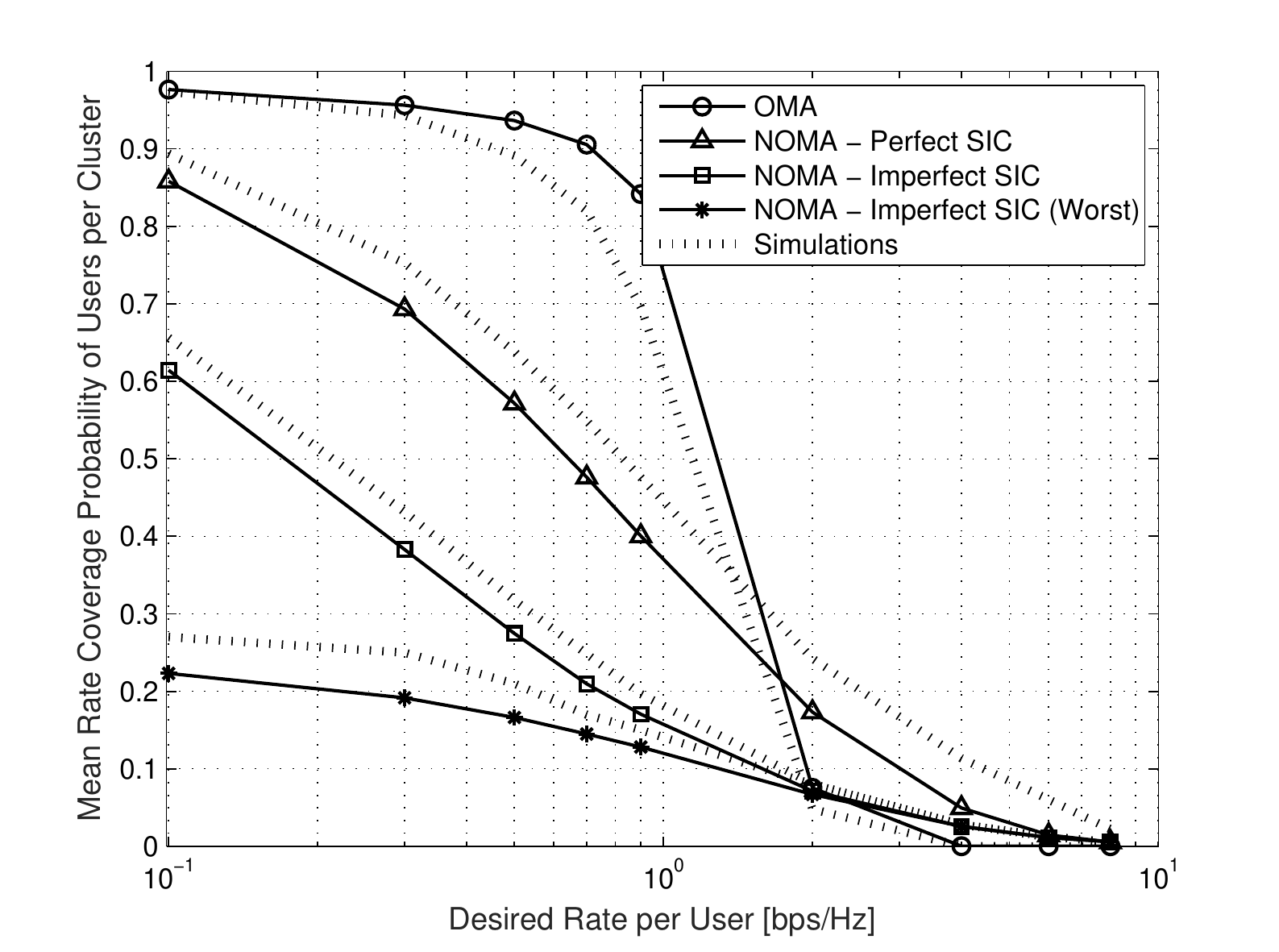}
\caption[c] { Mean rate coverage probability as a function of  the radius of each cluster in the single-tier cellular network, $\lambda_m|A|=8$.}
\label{rth1}
\end{center}
\end{figure}

\figref{rth1} depicts the mean rate coverage probability of all users in a NOMA cluster as a function of the users' rate requirements  $R_\mathrm{th}$ considering perfect SIC, imperfect SIC, and worst-case SIC. The results are compared with the mean rate coverage of an equivalent OMA system in which one user transmits at a time in each cell/cluster.
The performance of NOMA generally  turns out to be better than OMA for higher values of $R_\mathrm{th}$. The reason is that  OMA is more susceptible to higher values of $R_\mathrm{th}$ due to the  multiplicative factor ${R_\mathrm{th} \bar{c}}$ in the SINR threshold of OMA. Note that $\gamma_m^{(\mathrm{oma})}=2^{R_\mathrm{th} \bar{c}}-1$. Interestingly, it can be seen that the rate coverage of worst-case SIC bound is quite similar to the imperfect SIC at higher values of $R_{\mathrm{th}}$. However, the difference is visible at lower values of $R_{\mathrm{th}}$. The reason is that, at lower values of $R_{\mathrm{th}}$, the rate outages with imperfect SIC occur mainly due to the detection failures. As such, precisely mapping the impact of  detection failures becomes crucial.
Since the worst-case SIC bound assumes successful rate coverage of $m$-th user only when all relatively stronger users detected correctly, its worst performance is self-explanatory in this case.

\subsection{Impact of the Number of Users per Cluster}
\begin{figure}
\centering
\includegraphics[width = 3.55in]{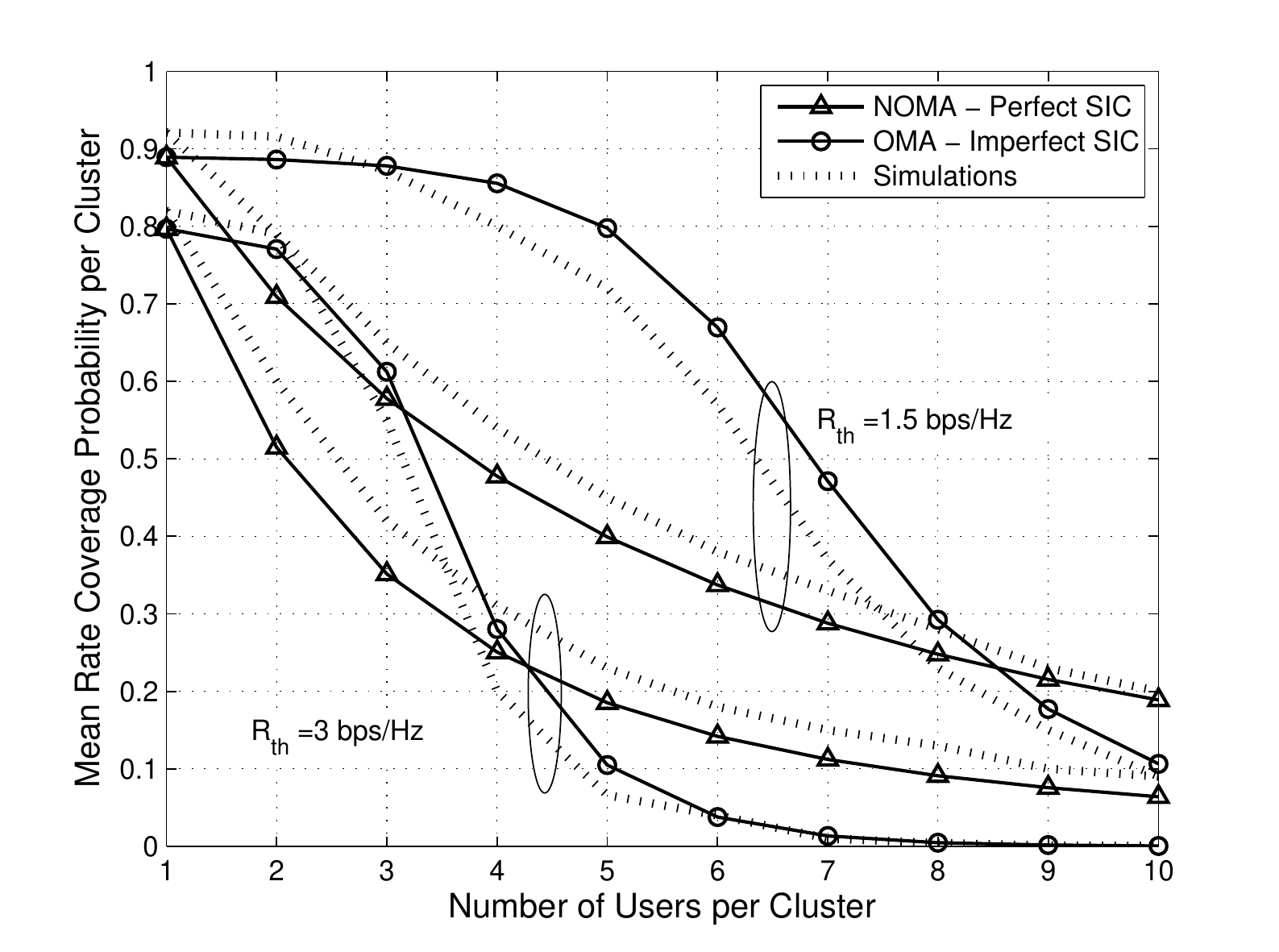}
\caption[c] {Mean rate coverage probability as a function of  the number of users per cluster in the single-tier cellular network, $\lambda_m|A|=8$, $R=0.8$~km.}
\label{nomarth1}
\end{figure}

\figref{nomarth1} represents the mean rate coverage probability of a cluster as a function of $\bar{c}$ considering $\lambda_m|A|=8$. It can be seen that the rate coverage of both OMA and NOMA generally decreases with  increasing  $\bar{c}$ and  target rate requirements $R_{\mathrm{th}}$ of the users. Clearly, with  increase in $\bar{c}$, the reduction is due to the increasing intra-cell and inter-cell interferences in NOMA and the reduced share of resources in OMA.  On the other hand, with  increasing $R_{\mathrm{th}}$, the  SINR threshold increases exponentially which reduces the coverage probability for both OMA and NOMA. However, the rate of decay of OMA is faster than NOMA for increasing  $R_{\mathrm{th}}$. The reason is that the SINR threshold of OMA has a  multiplicative factor ${R_\mathrm{th} \bar{c}}$ which is not the case in  NOMA.

\subsection{Poisson Point Process vs Poisson Cluster Process}

\begin{figure}[h]
\begin{center}
\includegraphics[width = 3.55in]{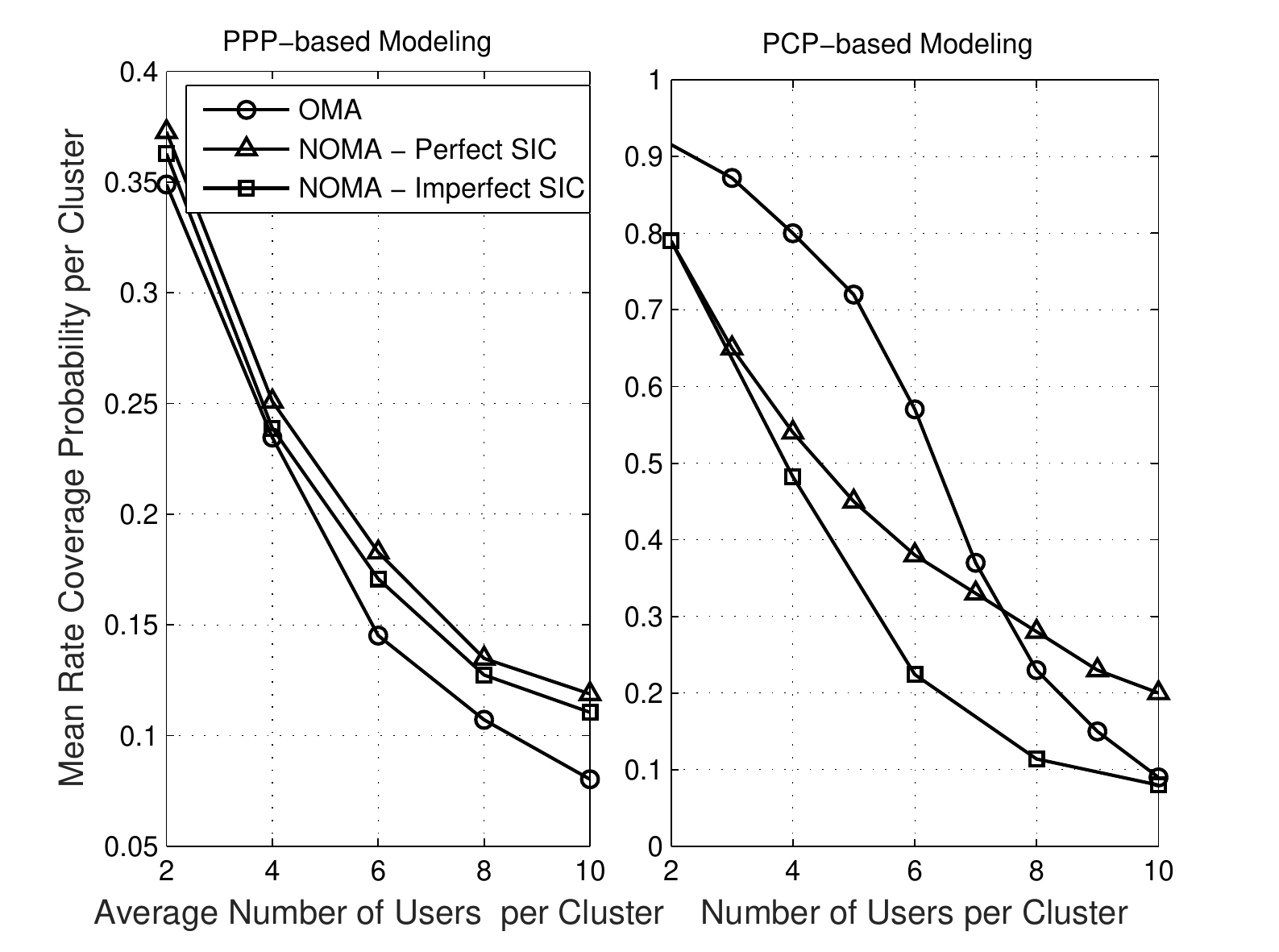}
\caption[c] { Mean rate coverage probability as a function of  the number of users per cluster for PCP-based modeling and total number of users as $\bar c*\lambda_m$ in the single-tier cellular network, $\lambda_m=8$, $R=800$~m, and $R_{\mathrm{th}}=1.5$~bps/Hz.}
\label{ppp}
\end{center}
\end{figure}

\figref{ppp} compares the performances of OMA and NOMA systems while comparing the two modeling approaches with equal intensity of BSs and users available in a square region. It is observed that PPP-based modeling shows reduced coverage probabilities for both OMA and NOMA systems, compared to PCP-based modeling. The reason is that there is no clustering of users around the BSs and they are spread independent of the locations of the BSs. As such, the impact of path-loss degradation is significant. Note that we assume minimum distance-based association  for PPP-based modeling. Also, the path-loss degradation is significant in PPP which benefits NOMA over OMA by lowering intra-cell interference among users. Consequently, NOMA with both perfect and imperfect SIC is observed to perform nearly the same and better than OMA. On the other hand, the PCP-based modeling shows that NOMA outperforms OMA for a certain range of users per cluster and that the perfect SIC outperforms imperfect SIC.

\section{Conclusion}

We characterize the rate coverage probability of a user in NOMA cluster who is at rank $m$ among all users  and the mean rate coverage probability of all users in the cluster considering perfect SIC, imperfect SIC, and imperfect worst case SIC.   In order to characterize the Laplace transforms of the intra-cluster interferences in closed-form considering both perfect and imperfect SIC scenarios, we derived novel distance distributions. The Laplace transform of the inter-cluster interference is then characterized by exploiting distance distributions from geometric probability as well as deriving an upper bound on the Laplace transform. Numerical results have been presented to validate the derived expressions. It has been shown that the average rate coverage of a NOMA cluster  outperforms its counterpart OMA cluster  in the cases of higher number of users per cell and higher target rate requirements. A comparison of PPP-based and PCP-based modeling has been conducted and it has been  shown that the PPP-based modeling provides optimistic results for the NOMA systems. Also, the distinctness of perfect and imperfect SIC is difficult to calibrate with PPP-based modeling of NOMA systems.

\section*{Appendix A\\Proof of Lemma~1}
\renewcommand{\theequation}{A.\arabic{equation}}
\setcounter{equation}{0}
Given the cluster members (users) are ranked according to their distances from the cluster center (BS), the distance distribution of a user at rank $m$ can be given using the standard theory of order statistics as follows~\cite{orderbook}:
\begin{equation}\label{A1}
f_{r_{(m)}}(r)=\frac{\bar{c}! (F_r(r))^{m-1} (1-F_r(r))^{\bar{c}-m} f_r(r)}{(\bar{c}-m)! (m-1)!}.
\end{equation}
For a Matern cluster process, cluster members are uniformly and identically distributed around their respective cluster center (BS); therefore, the unordered probability density function (PDF) and cumulative density function (CDF) of their distances from the cluster center is given by  \eqref{fr} and $r^2/R^2$, respectively. As such, \eqref{A1} can be rewritten as:
\begin{equation}\label{A2}
f_{r_{(m)}}(r)=\frac{\bar{c}! (\frac{r^2}{R^2})^{m-1} (1-\frac{r^2}{R^2})^{\bar{c}-m} (\frac{2 r}{R^2})}{(\bar{c}-m)! (m-1)!}.
\end{equation}
Using the definition of Gamma function $\Gamma(x)=(x-1)!$, the definition of the Euler Beta function ${B}(p,q)=\frac{\Gamma(p)\Gamma(q)}{\Gamma(p+q)}$, and doing some algebraic manipulations, we can write \eqref{A2} as:
\begin{equation}\label{A3}
f_{r_{(m)}}(r)=\frac{2\frac{r^{2m-1}}{R^{2m}}(1-\frac{r^2}{R^2})^{\bar{c}-m} }{{B}(m,\bar{c}-m+1)}.
\end{equation}
Now using the definition of the Generalized Beta (GB) distribution for a random variable $Z$, i.e.,
\begin{equation}
f_Z(z)=\frac{|a| z^{ap-1} (1-(1-c)(z/b)^a)^{q-1}}{b^{a p} \mathcal{B}(p,q) (1+c(z/b)^a)^{p+q}}, \quad 0 \leq z^a \leq \frac{b^a}{1-c},
\nonumber
\end{equation}
and substituting $c=0$, $z=r$, $a=2$,  $b=R$, $p=m$, and $q=\bar{c}-m+1$, the result in {\bf Lemma~1} can be readily proved. Note that the GB distribution with $c=0$ becomes a GB distribution of the first kind.

\section*{Appendix B\\Proof of Lemma~2}
\renewcommand{\theequation}{B.\arabic{equation}}
\setcounter{equation}{0}
The  joint distribution of the distances of the users in $\mathcal{N}^{x_0}_{\mathrm{out}} \in \{m+1, m+2, \cdots, \bar{c}\}$, i.e., 
$f_{r_{(m+1)}, r_{(m+2)}, \cdots, r_{(\bar{c})}}$ 
can be derived along the lines of~\cite{afshang1} as follows:
\begin{align}\label{app1} 
&\stackrel{(a)}{=}
\bar{c}! \prod_{i=m}^{\bar c} {f_{r}(r_{(i)})}
\int \cdots\int 
\prod_{i=1}^{m-1} {f_{r}(r_{(i)})} d r_{(1)}\cdots dr_{(m-1)},
\nonumber\\&
\stackrel{(b)}{=}
\frac{\bar{c}!}{(m-1)!} \prod_{i=m}^{\bar c} f_{r}(r_{(i)})
\left(\int_0^{r_{(m)}} f_{r}(r) d r\right)^{m-1},
\nonumber\\&
{=}
\frac{\bar{c}!}{(m-1)!} \prod_{i=m}^{\bar c} {f_{r}(r_{(i)})}
(F_{r}(r_{(m)}))^{m-1},
\end{align}
where (a) follows from averaging out the distance variables $r_{(1)}, r_{(2)}, \cdots, r_{(m-1)}$  and (b) follows from the i.i.d. property.
Now given the distance $r_{(m)}=\hat r$, the
conditional joint distribution of  the distances of the users in $\mathcal{N}^{x_0}_{\mathrm{out}} \in \{m+1, m+2, \cdots, \bar{c}\}$, i.e., 
$f_{r_{(m+1)}, r_{(m+2)}, \cdots, r_{(\bar{c})}|r_{(m)}} (r_{(m+1)}, r_{(m+2)}, \cdots, r_{(\bar{c})}|r_{(m)})$ can be given as follows:
\begin{equation}\label{app2}
 (\bar{c}-m)! 
\prod_{i=m+1}^{\bar c} \frac{f_{r}(r_{(i)})}{1-F_{r}(r_{(m)})}, \quad r_{(m)} \leq r_{(i)}.
\end{equation}
It is noteworthy that, for perfect SIC scenario, the interference experienced at the BS from users in set $\mathcal{N}^{x_0}_{\mathrm{out}}$ is cumulative therefore  the interfering users can be chosen without any specific ordering. Subsequently, the permutation $(\bar{c}-m)!$  in \eqref{app2} will not appear and the conditional joint distribution of the unordered set of distance random variables is given as:
\begin{equation}\label{app3}
\Scale[1]{f_{r_{m+1}, \cdots, r_{\bar{c}}|\hat r} (r_{m+1}, \cdots, r_{\bar{c}}|\hat r)
=\prod_{i=m+1}^{\bar{c}} \frac{f_{r}(r_{i})}{1-F_{r}(r_{(m)})} },
\end{equation}
where $r_{(m)} \leq r_{i}$. Consequently,  the product of truncated distributions  in \eqref{app3}   implies that the random variables of the unordered set are i.i.d. Therefore, the distribution of $m-1$ i.i.d. distance variables can be given as in {\bf Lemma~2}. Using similar arguments, the conditional PDF of $r_{\mathrm{in}}$ can be derived.

\section*{Appendix C\\Proof of Lemma~4}
\renewcommand{\theequation}{C.\arabic{equation}}

The Laplace transform of the intra-cluster interference with perfect SIC  as defined in \eqref{perfectII}, can  be derived as follows:
\begin{align*}
&\mathcal{L}_{I^{m}_{\mathrm{intra}}}(s)=\mathbb{E}\left[\mathrm{exp}\left(-s \sum_{y \in \mathcal{N}^{x_0}_{\mathrm{out}}}  P_u h_{{y}_{x_0}} ||y||^{-\alpha}\right)\right],
\\
&{=}\mathbb{E}_{\mathcal{N}^{x_0}_{\mathrm{out}}}\left[\prod_{y \in \mathcal{N}^{x_0}_{\mathrm{out}}}  \mathbb{E}_{h_{y_{x_0}}} \left[\mathrm{exp}\left(- s P_u h_{{y}_{x_0}} ||y||^{-\alpha}\right)\right]\right],
\\
&\stackrel{(a)}{=}\mathbb{E}_{\mathcal{N}^{x_0}_{\mathrm{out}}}\left[\prod_{y \in \mathcal{N}^{x_0}_{\mathrm{out}}}  \frac{1}{1+ s P_u||y||^{-\alpha}}\right],
\\
&\stackrel{(b)}{=}
\left(
\int_{\hat r}^{R} 
\frac{1}{1+ s P_u r_{\mathrm{out}}^{-\alpha}} f_{r_{\mathrm{out}}} (r_{\mathrm{out}}|\hat r) d r_{\mathrm{out}}\right)^{\bar c-m}.
\end{align*}
Note that (a) follows from the  definition of the Laplace transform of $h_{y_{x_0}}$ which is exponentially distributed unit mean random variable and (b) follows from the conversion of Cartesian to polar coordinates as well as the fact that the distances from the interfering devices in set $\mathcal{N}^{x_0}_{\mathrm{out}}$ to the BS,
conditioned on $\hat r$, are i.i.d. random variables. The integral in (b) can be solved in closed-form as follows:
\begin{equation}\label{c}
\Scale[1.2]{\frac{R^{2+\alpha}
\:_2F_1[1,1+\frac{2}{\alpha}, 2+\frac{2}{\alpha},-\frac{R^\alpha}{s P_u}]
-\hat{r}^{2+\alpha}
\:_2F_1[1,1+\frac{2}{\alpha}, 2+\frac{2}{\alpha},-\frac{\hat{r}^\alpha}{s P_u}]
}{s (P_u/2) (\alpha+2) (R^2-\hat{r}^2)}}.
\end{equation}
Finally, {\bf Lemma~3} can be readily obtained after simplifying the Gauss hyper-geometric functions in \eqref{c} as $\:_2F_1[a,b,b+1,z]=b z^{-b} \mathbf{B}_z(b,1-a)$ \cite[Eq. 07.23.03.0084.01]{mathematica} and then using the definition of the Generalized incomplete Beta function as $\tilde{\mathbf{B}}_{z_1,z_2}(x,y)=\mathbf{B}_{z_2}(x,y)-\mathbf{B}_{z_1}(x,y)$.

\section*{Appendix D\\Proof of Lemma~5}
\renewcommand{\theequation}{D.\arabic{equation}}
The Laplace transform of the intra-cluster interference experienced by the transmission of user at rank $m$, as defined in \eqref{imperfectI}, can  be derived as follows:
\begin{align}
&
\mathcal{L}_{I^{m}_{\mathrm{add}}}(s)
=
\mathbb{E}\left[\mathrm{exp}\left(- s \sum_{y_{(j)} \in \mathcal{N}^{x_0}_{\mathrm{in}}} (1-b(j)) P_u h_{{y}_{x_0}} ||y_{(j)}||^{-\alpha}\right)
\right],
\nonumber\\&
\stackrel{(a)}{=}
\mathbb{E}_{\mathcal{N}^{x_0}_{\mathrm{in}}}
{
\left[\prod_{y_{(j)} \in \mathcal{N}^{x_0}_{\mathrm{in}}}  
\mathbb{E}_{h_{{y}_{x_0}}}[e^{-s (1-b(j)) P_u h_{{y}_{x_0}} ||y_{(j)}||^{-\alpha}}]
\right]},
\nonumber\\
&\stackrel{(b)}{=}
\mathbb{E}_{\mathcal{N}^{x_0}_{\mathrm{in}}}
\left[\prod_{y_{(j)} \in \mathcal{N}^{x_0}_{\mathrm{in}}}  
\left(\frac{1}{1+ s (1-b(j)) P_u||y_{(j)}||^{-\alpha}} \right)^{1-b(j)}\right],
\nonumber\\
&\stackrel{(c)}{=}
\prod_{j=1}^{m-1} 
\left(
\int_{0}^{\hat r} 
\frac{1}{1+ s P_u r_{(j)}^{-\alpha}} f_{r_{(j)}} (r_{(j)}|\hat r) d r_{(j)}
\right)^{1-b(j)},
\end{align}
where (a) follows by applying the properties of the exponential function,
 (b) follows from  the Laplace transform of $h_{y_{x_0}}$ which is exponentially distributed unit mean random variable and  by shifting the term $1-b(j)$ into the powers. Note that $b(j) \in \{0,1\}$ and if $b(j)=1$ the interference from $i$-th ranked user is absent as its signal gets detected and if $b(j)=0$  the interference from $j$-th ranked user is present since its signal goes undetected, Finally, (c) follows from the conversion to polar coordinates and exploiting the following assumption.

{\bf Assumption:} Note that the distributions of ranked user devices in set $\mathcal{N}^{x_0}_{\mathrm{in}}$ are correlated.  The dependence is however weak since they are multiplied with random fading channel gains~\cite{tabassum2015spectral}. Thus, we ignore this dependence and subsequently interchange the operation of products and integration. 

Finally, substituting the distributions from Lemma~3, we apply the Binomial expansion and solve the integral using\cite[Eq. 07.23.03.0084.01]{mathematica} to obtain {\bf Lemma~5}.

\section*{Appendix E\\Proof of Lemma~6}
By applying the McLaurin Series expansion $\mathrm{sin}^{-1}(z)=\sum_{k=0}^\infty \frac{\Gamma[k+0.5]}{\sqrt{\pi} (2k+1) k!} z^{2k+1}$ and the Binomial expansion, $\mathrm{I}_2$ can be solved in closed-form as follows:
\begin{align}\label{aa}
\sum_{k=0}^{\infty} \sum_{j=0}^{2k+1}&
\frac{C\tilde{\mathbf{B}}(\frac{-(v-R)^\alpha}{s},\frac{-(v+R)^\alpha}{s},1+\frac{4-2j+2k}{\alpha},0)}{\alpha R^2 (-1/s)^{\frac{4-2j+2k}{\alpha}}} 
\nonumber\\&-\frac{\tilde{\mathbf{B}}(\frac{-(v-R)^\alpha}{s},\frac{-(v+R)^\alpha}{s},1+\frac{3}{\alpha},0)}{\alpha R^2 s^{-\frac{3}{\alpha}} (-1/s)^{\frac{3}{\alpha}}},
\end{align}
where 
$
C=\frac{\Gamma[k+0.5] {2k+1 \choose j} }{\sqrt{\pi} (2k+1) k! (v^2-R^2)^{-j} v^{2k+1} 2^{2k}}.
$ Similarly, the first part of $\mathrm{I}_1$ can be given in closed-form as $\frac{2 \mathbf{B}(-\frac{(R-v)^\alpha}{s}, 1 + \frac{2}{\alpha}, 0)}{ (-1/s)^{1+\frac{2}{\alpha}}\alpha (R^2) s }$ and second part of $\mathrm{I}_1$ can be solved in closed-form by replacing $v-R$ with $R-v$ in \eqref{aa}. Interested readers can go through \cite{geom2} for further details.

\bibliography{IEEEfull,Ref1}

\begin{thebibliography}{10}
\providecommand{\url}[1]{#1}
\csname url@samestyle\endcsname
\providecommand{\newblock}{\relax}
\providecommand{\bibinfo}[2]{#2}
\providecommand{\BIBentrySTDinterwordspacing}{\spaceskip=0pt\relax}
\providecommand{\BIBentryALTinterwordstretchfactor}{4}
\providecommand{\BIBentryALTinterwordspacing}{\spaceskip=\fontdimen2\font plus
\BIBentryALTinterwordstretchfactor\fontdimen3\font minus
  \fontdimen4\font\relax}
\providecommand{\BIBforeignlanguage}[2]{{%
\expandafter\ifx\csname l@#1\endcsname\relax
\typeout{** WARNING: IEEEtran.bst: No hyphenation pattern has been}%
\typeout{** loaded for the language `#1'. Using the pattern for}%
\typeout{** the default language instead.}%
\else
\language=\csname l@#1\endcsname
\fi
#2}}
\providecommand{\BIBdecl}{\relax}
\BIBdecl

\bibitem{saito2013}
Y.~Saito, Y.~Kishiyama, A.~Benjebbour, T.~Nakamura, A.~Li, and K.~Higuchi,
  ``Non-orthogonal multiple access {(NOMA)} for cellular future radio access,''
  \emph{in Proc. of IEEE Vehicular Technology Conference (VTC Spring)}, June.
  2013.

\bibitem{ben2015}
A.~Benjebbour, K.~Saito, A.~Li, Y.~Kishiyama, and T.~Nakamura, ``Non-orthogonal
  multiple access (noma): Concept, performance evaluation and experimental
  trials,'' \emph{in Proc. of IEEE International Conference on Wireless
  Networks and Mobile Commun. (WINCOM'15)}, Oct. 2015.

\bibitem{ding2014}
Z.~Ding, Z.~Yang, P.~Fan, and H.~Poor, ``On the performance of non-orthogonal
  multiple access in {5G} systems with randomly deployed users,'' \emph{IEEE
  Signal Processing Letters}, vol.~21, no.~12, pp. 1501--1505, Dec. 2014.

\bibitem{ding2015}
Z.~Ding, M.~Peng, and H.~V. Poor, ``Cooperative non-orthogonal multiple access
  in {5G} systems,'' \emph{IEEE Commun. Letters}, vol.~19, no.~8, pp.
  1462--1465, Aug. 2015.

\bibitem{liu}
Y.~Liu, Z.~Ding, M.~Elkashlan, and H.~V. Poor, ``Cooperative non-orthogonal
  multiple access with simultaneous wireless information and power transfer,''
  \emph{IEEE Journal of Selected Areas in Commun.}, vol.~34, no.~4, pp.
  938--953, 2016.

\bibitem{dingearly}
Z.~Ding, P.~Fan, and H.~V. Poor, ``Impact of user pairing on 5g non-orthogonal
  multiple access downlink transmissions,'' \emph{IEEE Trans. on Vehicular
  Technology}, vol.~PP, no.~99, pp. 938--953, 2016.

\bibitem{zhang2016}
N.~Zhang, J.~Wang, G.~Kang, and Y.~Liu, ``Uplink non-orthogonal multiple access
  in 5g systems,'' \emph{IEEE Commun. Letters}, vol.~20, no.~3, pp. 458--461,
  Mar. 2016.

\bibitem{uplink1}
{T. Takeda and K. Higuchi}, ``{ Enhanced user fairness using nonorthogonal
  access with SIC in cellular uplink},'' \emph{{in Proc. IEEE VTC 2011-Fall}},
  {Sep. 2011}.

\bibitem{uplink2}
{Y. Endo, Y. Kishiyama, and K. Higuchi}, ``{Uplink Non-orthogonal Access with
  MMSE-SIC in the Presence of Inter-cell Interference},'' \emph{{” in Proc.
  IEEE ISWCS}}, {2012}.

\bibitem{uplink3}
{C. W. Sung and Y. Fu}, ``{A game-theoretic analysis of uplink power control
  for a non-orthogonal multiple access system with two interfering cells},''
  \emph{{in Proc. IEEE VTC 2016-Spring}}, {2016}.

\bibitem{nomamag}
H.~Tabassum, M.~S. Ali, E.~Hossain, M.~J. Hossain, and D.-I. Kim,
  ``Non-orthogonal multiple access (noma) in cellular uplink and downlink:
  Challenges and enabling techniques,'' \emph{arXiv:1608.05783}, 2016.

\bibitem{martin}
{R. K. Ganti and M. Haenggi}, ``{Interference and outage in clustered wireless
  ad hoc networks},'' \emph{{IEEE Trans. on Info. Theory}}, vol.~55, no.~9, p.
  4067–4086, Sep. 2009.

\bibitem{afshang1}
{M. Afshang, H. S. Dhillon, and P. H. J. Chong}, ``{“Modeling and performance
  analysis of clustered device-to-device networks},'' \emph{{ IEEE Transactions
  on Wireless Communications}}, vol.~15, no.~7, pp. 4957 -- 4972, July 2016.

\bibitem{afshang2}
{ M. Afshang, H. S. Dhillon, and P. H. J. Chong}, ``{Fundamentals of
  cluster-centric content placement in cache-enabled device-to-device
  network},'' \emph{{ IEEE Transactions on Communications}}, vol.~64, no.~6,
  pp. 2511--2526, 2016.

\bibitem{hasna}
{Y. J. Chun, M. O. Hasna, and A. Ghrayeb}, ``{Modeling heterogeneous cellular
  Networks Interference Using Poisson Cluster Processes},'' \emph{{IEEE Journal
  on Selected Areas in Communications}}, vol.~33, no.~10, pp. 2182 -- 2195,
  Oct. 2015.

\bibitem{afshang0}
{Chiranjib Saha, Harpreet S. Dhillon}, ``{Downlink coverage probability of
  K-tier HetNets with general non-uniform user distributions},'' \emph{{ in
  proc. IEEE ICC Kuala Lumpur, Malaysia}}, 2016.

\bibitem{hesham1}
{H. ElSawy and E. Hossain}, ``{On stochastic geometry modeling of cellular
  uplink transmission with truncated channel inversion power control},''
  \emph{{IEEE Trans. on Commun.}}, vol.~13, no.~8, p. 4454–4469, Aug. 2014.

\bibitem{harpreet}
{H. S. Dhillon, R. K. Ganti, F. Baccelli, and J. G. Andrews}, ``{Modeling and
  analysis of K-tier downlink heterogeneous cellular networks},'' \emph{{IEEE
  Journal on Sel. Areas in Commun}}, vol.~30, no.~3, p. 550 – 560, Apr. 2012.

\bibitem{iccc}
{Q. Ying, Z. Zhao, Y. Zhou, R. Li, X. Zhou, and H. Zhang}, ``{Characterizing
  spatial patterns of base stations in cellular networks},'' \emph{{in Proc.
  IEEE ICCC, Shanghai, China}}, p. 490–495, Oct. 2014.

\bibitem{SIC1}
M.~Mollanoori and M.~Ghaderi, ``Fair and efficient scheduling in wireless
  networks with successive interference cancellation,'' \emph{IEEE Wireless
  Communications and Networking Conference}, pp. 221--226, 2011.

\bibitem{SIC2}
C.~Ma, W.~Wu, Y.~Cui, and X.~Wang, ``On the performance of successive
  interference cancellation in d2d-enabled cellular networks,'' \emph{2015 IEEE
  Conference on Computer Communications (INFOCOM)}, pp. 37--45, 2015.

\bibitem{approx1}
{G. Geraci, M. Wildemeersch, and T. Q. S. Quek}, ``{Energy Efficiency of
  Distributed Signal Processing in Wireless Networks: A Cross-Layer
  Analysis},'' \emph{{IEEE Transactions on Signal Processing}}, vol.~64, no.~4,
  pp. 1034 -- 1047, Feb. 2016.

\bibitem{afshang3}
{ M. Afshang and H. S. Dhillon}, ``{Optimal geographic caching in finite
  wireless networks},'' \emph{{ in Proc. IEEE SPAWC}}, 2016.

\bibitem{afshang4}
{M. Afshang and H. S. Dhillon}, ``{Fundamentals of modeling finite wireless
  networks using Binomial point process},''
  \emph{{https://arxiv.org/abs/1606.04405}}, July 2016.

\bibitem{geom1}
{F. Adelantado, J. Pérez-Romero, and O. Sallent}, ``{Nonuniform traffic
  distribution model in reverse link of multirate/multiservice wcdma-based
  systems},'' \emph{{IEEE Transactions on Vehicular Technology}}, vol.~56,
  no.~5, p. 2902–2914, Sep. 2007.

\bibitem{geom2}
{H. Tabassum, Z. Dawy, E. Hossain, and M. -S. Alouini}, ``{Interference
  statistics and capacity analysis for uplink transmission in two-tier
  small-cell networks: A geometric probability approach},'' \emph{{IEEE
  Transactions on Wireless Communications}}, vol.~13, no.~7, pp. 3837--3852,
  Mar. 2014.

\bibitem{orderbook}
{H. A. David and H. N. Nagaraja}, ``{ Order Statistics},'' \emph{{3rd edition,
  New York: Wiley}}, { 2003}.

\bibitem{mathematica}
{Wolfram Research}, \emph{\uppercase{M}athematica \uppercase{E}dition:
  \uppercase{V}ersion 8.0}.\hskip 1em plus 0.5em minus 0.4em\relax Champaign,
  Illinois: Wolfram Research, Inc., 2010.

\bibitem{tabassum2015spectral}
H.~Tabassum, E.~Hossain, M.~J. Hossain, and D.~I. Kim, ``On the spectral
  efficiency of multiuser scheduling in rf-powered uplink cellular networks,''
  \emph{IEEE Transactions on Wireless Communications}, vol.~14, no.~7, pp.
  3586--3600, 2015.

\end{thebibliography}
\bibliographystyle{IEEEtran}

\end{document}